\theoremstyle{plain}
\newtheorem{theorem}{Theorem}[section]
\newtheorem{lemma}[theorem]{Lemma}
\newtheorem{proposition}[theorem]{Proposition}
\begin{document}

\title{Application of the extended $q$-discrete Toda equation to computing eigenvalues of Hessenberg totally nonnegative matrices}
\author{R. Watanabe\footnote{Graduate School of Informatics, Kyoto University (watanabe.ryoto.37p@st.kyoto-u.ac.jp)}, \and M. Shinjo\footnote{Faculty of Science and Engineering, Doshisha University (mshinjo@mail.doshisha.ac.jp)}, \and Y. Yamamoto\footnote{Department of Communication Engineering and Informatics, The University of Electro-Communications (yusaku.yamamoto@uec.ac.jp)} \and and \and M. Iwasaki\footnote{Faculty of Life and Environmental Sciences, Kyoto Prefectural University (imasa@kpu.ac.jp)}}
\maketitle
%
\begin{abstract}
The Toda equation is one of the most famous integrable systems, and its time-discretization is simply the recursion formula of the quotient-difference (qd) algorithm for computing eigenvalues of tridiagonal matrices. An extension of the Toda equation is the $q$-Toda equation, which is derived by replacing standard derivatives with the so-called $q$-derivatives involving a parameter $q$ such that $0<q<1$. In our previous paper, we showed that a discretization of the $q$-Toda equation is shown to be also applicable to computing tridiagonal eigenvalues. In this paper, we consider another extension of the $q$-discrete Toda equation and find an application to computing eigenvalues of Hessenberg totally nonnegative (TN) matrices, which are matrices where all minors are nonnegative. There are two key components to our approach. First, we consider the extended $q$-discrete equation from the perspective of shifted $LR$ transformations, similarly to the discrete Toda and its $q$-analogue cases. Second, we clarify asymptotic convergence as discrete-time goes to infinity in the $q$-discrete Toda equation by focusing on TN properties. We also present two examples to numerically verify convergence to Hessenberg TN eigenvalues numerically. 
\end{abstract}
%
%
%
%
\section{Introduction}
\label{sec:1}
%
Some integrable systems have interesting relationships to algorithms for computing matrix eigenvalues. The oldest observation is the case of the Toda equation \cite{Toda_1967} which describes mass motion governed by nonlinear springs. The time evolution from $t=0$ to $t=1$ in the Toda equation corresponds to $1$-step of the $QR$ algorithm, which generates similarity transformations of exponential of a tridiagonal matrix \cite{Symes_1982}. 
A discrete-time version of the Toda equation \cite{Hirota_1981} is simply the recursion formula of the quotient-difference (qd) algorithm \cite{Rutishauser_1990} for computing eigenvalues of tridiagonal matrices. The qd algorithm generates a series of $LR$ transformations which decompose tridiagonal matrices into products of lower and upper bidiagonal matrices and then reverses the order of the products. Thus, discrete-time evolutions in the discrete Toda equation also perform this function. An integrable system closely related to the Toda equation is the Lotka--Volterra (LV) system, which describes the simplest prey-predator model. The LV system has asymptotic convergence as time goes to infinity to singular values of a bidiagonal matrix, or equivalently, to eigenvalues of a positive-definite tridiagonal matrix \cite{Chu_1986}. The discrete LV (dLV) system thus enables us to design a numerical algorithm for bidiagonal singular values \cite{Iwasaki_2002,Iwasaki_2004}. Similarly to the discrete Toda case, discrete-time evolutions in the dLV system can be regarded as generating tridiagonal $LR$ transformations.
\par
Discrete hungry LV (dhLV) systems \cite{Tsujimoto_1993,Suris_1996} are an extension of the dLV system in which each species is assumed to prey on one, two, or more species. See \cite{Bogoyavlenskii_1991} and \cite{Itoh_1987} for continuous-time versions of the dhLV systems. The discrete hungry Toda (dhToda) equations \cite{Tokihiro_1999,Fukuda_2013_1}, which link to the dhLV systems, are of course extensions of the discrete Toda equation. These discrete hungry integrable systems are applicable to computing eigenvalues of Hessenberg totally nonnegative (TN) matrices, which are matrices where all minors are nonnegative \cite{Fukuda_2013_1}. However, because they employ similarity transformations based on repeating the tridiagonal $LR$ transformations, they require the target matrix to be expressed as a product of bidiagonal matrices in the initial setting. 
\par
The Kostant Toda equation~\cite{Kostant_1979} is an extension of the continuous Toda equation. It is related to the discrete Korteweg de Vries equation~\cite{Rolania_2019}, which is equivalent to the continuous hungry LV system. The continuous analogue of the dhToda equation is a special case of the Kostant Toda equation~\cite{Shinjo_2020}. In this paper, we focus on the discrete systems.
\par
The $q$-analogue of the Toda equation \cite{Area_2018}, which involves a parameter $q$ satisfying $0<q<1$, is another extension of the Toda equation. Our previous paper \cite{Watanabe_2022} associated the $q$-Toda equation with the tridiagonal eigenvalue problem, and showed that its time-discretization generates tridiagonal similarity transformations which do not require bidiagonal factorizations. In this paper, we consider a further extension of the $q$-discrete Toda equation and relate it to similarity transformations of Hessenberg matrices. Moreover, by examining asymptotic convergence as discrete time goes to infinity in the extended $q$-discrete Toda equation, we show that it can compute the eigenvalues of TN Hessenberg matrices without employing bidiagonal factorizations.
\par
The remainder of this paper is organized as follows. In Section \ref{sec:2}, we derive an extension of the $q$-discrete Toda equation which is related to similarity transformations of Hessenberg matrices. We also find a continuous analogue of the extended $q$-discrete Toda equation. In Section \ref{sec:3}, we introduce the implicit $L$ theorem to consider the similarity transformations from the perspective of shifted $LR$ transformations whose targets are TN Hessenberg matrices rather than tridiagonal matrices. In Section \ref{sec:4}, by imposing the TN structure on the similarity transformations, we clarify asymptotic convergence as discrete time goes to infinity to TN Hessenberg eigenvalues. We also present numerical examples to demonstrate the asymptotic convergence. Finally, in Section \ref{sec:5}, we give some concluding remarks.
%
%
\section{Extended $q$-discrete Toda equation}
\label{sec:2}
%
In this section, we describe an extension of the $q$-discrete of Toda equation, and then clarify its relationship to similarity transformations of a Hessenberg matrix. 
\par
From Area et al. \cite{Area_2018}, a $q$-analogue of the Toda equation with a parameter $0<q<1$ is defined by:
\begin{align}
\label{eqn:qToda}
\left\{ 
\begin{aligned}
& \dfrac{x_{i}(t;q)-x_{i}(qt;q)}{(1-q)t} = g_{i}(qt;q) - g_{i-1}(qt;q), \quad i=1,2,\ldots,m, \\ 
& \dfrac{y_{i}(t;q)-y_{i}(qt;q)}{(1-q)t}=g_{i}(qt;q) \left( x_{i+1}(qt;q) - x_{i}(t;q) \right), \quad i=1,2,\ldots,m-1, \\
& y_{0}(t;q) \coloneqq 0, \quad y_{m}(t;q) \coloneqq 0, 
\end{aligned}
\right.
\end{align}
where the auxiliary variables $g_{i}(qt;q)$ are given as:
\begin{align}
\label{eqn:qToda_g}
\left\{ 
\begin{aligned}
& g_{1}(qt;q) = \dfrac{ y_{1}(qt;q) }{ 1+(1-q)tx_{1}(qt;q) }, \quad 
g_{i}(qt;q) = \dfrac{ y_{i}(qt;q) }{ y_{i-1}(t;q) } g_{i-1}(qt;q), \quad i=2,3,\ldots,m-1, \\
& g_{0}(qt;q) \coloneqq 0, \quad g_{m}(qt;q) \coloneqq 0. 
\end{aligned}
\right.
\end{align}
As $q\to 1$, we see that $(x_{i}(t;q)-x_{i}(qt;q))/[(1-q)t]\to dx_{i}(t)/dt$, $(y_{i}(t;q)-y_{i}(qt;q))/[(1-q)t]\to dy_{i}(t)/dt$ and $g_{i}(t;q)\to y_{i}(t)$. Thus, we can easily check that the Toda equation is derived by taking the limit $q\to 1$ in the $q$-Toda equation \eqref{eqn:qToda}. 
We introduce a  time variable $t^{(n)}\coloneqq q^{-n}t^{(0)}$, and replace $x_i(t^{(n)};q),y_i(t^{(n)};q)$, and $g_i(t^{(n)};q)$ with $x_i^{(n)},y_i^{(n)}$, and $g_i^{(n)}$, respectively. Then, we can rewrite the $q$-Toda equation \eqref{eqn:qToda} as:
\begin{align}
\label{eqn:qdToda}
\left\{ 
\begin{aligned}
& x_{i}^{(n+1)}=x_{i}^{(n)}+(t^{(n+1)}-t^{(n)})\left(g_{i}^{(n)}-g_{i-1}^{(n)}\right), \quad i=1,2,\ldots,m, \\ 
& y_{i}^{(n+1)}=y_{i}^{(n)}+(t^{(n+1)}-t^{(n)})g_{i}^{(n)}\left( x_{i+1}^{(n)}-x_{i}^{(n+1)} \right), \quad i=1,2,\ldots,m-1, \\
& y_{0}^{(n)} \coloneqq 0, \quad y_{m}^{(n)} \coloneqq 0, 
\end{aligned}
\right.
\end{align}
where the auxiliary variables $g_{i}^{(n)}$ satisfy:
\begin{align*}
\left\{ 
\begin{aligned}
& g_{1}^{(n)} \coloneqq \dfrac{ y_{1}^{(n)} }{ 1+(t^{(n+1)}-t^{(n)})x_{1}^{(n)} }, \quad 
g_{i} = \dfrac{ y_{i}^{(n)} }{ y_{i-1}^{(n+1)} } g_{i-1}^{(n)}, \quad i=2,3,\ldots,m-1, \\
& g_{0}^{(n)} \coloneqq 0, \quad g_{m}^{(n)} \coloneqq 0. 
\end{aligned}
\right.
\end{align*}
We refer to \eqref{eqn:qdToda} as the $q$-discrete Toda equation. Our previous paper \cite{Watanabe_2022} related it to similarity transformations of tridiagonal matrices. We also showed that an extension of the $q$-discrete Toda equation can be applied to computing eigenvalues of $M$-tridiagonal matrices with two nonzero off-diagonals consisting of the $(1,M+1),(2,M+2),\dots,(N-M-1,N)$ and $(M+1,1),(M+2,2),\dots,(N,N-M-1)$ entries. 
\par
In this paper, we consider another extension of the $q$-discrete Toda equation and relate it to similarity transformations of a Hessenberg matrix. Our new extension has parameters $M$ and $\mu^{(n)}$ and can be written as follows:
\begin{align}
\label{eqn:extended_qdToda}
\left\{ 
\begin{aligned}
& x_{i,j}^{(n+1)} = x_{i,j}^{(n)} 
+ \mu^{(n)} \left( x_{i,j+1}^{(n)} g_{j}^{(n)} - g_{i-1}^{(n)} x_{i-1,j}^{(n+1)} \right), 
\quad i=1,2,\ldots,m, \quad j=i,i+1,\ldots,\ell_{i-1}, \\ 
& x_{i,\ell_{i}}^{(n+1)} = x_{i,\ell_{i}}^{(n)} 
+ \mu^{(n)} \left( g_{\ell_{i}}^{(n)} - g_{i-1}^{(n)} \right), 
\quad i=1,2,\ldots,m-M+1, \\ 
& y_{i}^{(n+1)} = y_{i}^{(n)} 
+ \mu^{(n)} g_{i}^{(n)} \left( x_{i+1,i+1}^{(n)} - x_{i,i}^{(n+1)} \right), 
\quad i=1,2,\ldots,m-1, \\
& x_{0,1}^{(n)} \coloneqq 0, \quad x_{0,2}^{(n)} \coloneqq 0, \quad \ldots, \quad x_{0,M-1}^{(n)} \coloneqq 0, \\
& x_{m-M+2,m+1}^{(n)} \coloneqq 0, \quad x_{m-M+3,m+1}^{(n)} \coloneqq 0, 
\quad \ldots, \quad x_{m,m+1}^{(n)} \coloneqq 0, \\
& y_{0}^{(n)} \coloneqq 0, \quad y_{m}^{(n)} \coloneqq 0, 
\end{aligned}
\right.
\end{align}
where $\ell_{i}\coloneqq\min(i+M-1,m)$, and $g_{i}^{(n)}$ is an auxiliary variable given by:
\begin{align}
\label{eqn:extended_qdToda_g}
\left\{ 
\begin{aligned}
& g_{1}^{(n)} \coloneqq \dfrac{ y_{1}^{(n)} }{ 1 + \mu^{(n)} x_{1,1}^{(n)} }, \quad
g_{i} = \dfrac{ y_{i}^{(n)} }{ y_{i-1}^{(n+1)} } g_{i-1}^{(n)}, \quad i=2,3,\ldots,m-1, \\
& g_{0}^{(n)} \coloneqq 0, \quad g_{m}^{(n)} \coloneqq 0. 
\end{aligned}
\right.
\end{align}
As will be shown below, this extension corresponds to extending the tridiagonal matrix associated with the $q$-discrete Toda equation to a Hessenberg matrix. Note that the extended $q$-discrete Toda equation \eqref{eqn:extended_qdToda} with $M=1$ coincides with the $q$-discrete Toda equation \eqref{eqn:qdToda}. We can determine the sequences $\{ x_{i,j}^{(n)} \}_{n=0,1,\ldots}$, $\{ y_{i}^{(n)} \}_{n=0,1,\ldots}$, and $\{ g_{i}^{(n)} \}_{n=0,1,\ldots}$ uniquely if $x_{i,j}^{(0)}$, $y_{i}^{(0)}$, and $g_{i}^{(0)}$ and $\mu^{(n)}$ are given. 
See also Figure \ref{fig:diagram_time_evolution} illustrates the time evolution from $n$ to $n+1$ in the extended $q$-discrete Toda equation \eqref{eqn:extended_qdToda}. 
\begin{figure}[tb]
\begin{center}
\[
\begin{tikzpicture}[
every node/.style={anchor=base,align=center,text depth=1ex,text height=2ex,text width=2.5em},
]
\matrix(A)[
matrix of math nodes,nodes in empty cells
] {
A^{(n)} & & & & G^{(n)} & & A^{(n+1)} & &
\\
\cdots & x_{1,j}^{(n)} & x_{1,j+1}^{(n)} & \cdots & & g_{1}^{(n)} & \cdots & x_{1,j}^{(n+1)} & \cdots
\\
& \vdots & \vdots & & & \vdots & & \vdots &
\\
\cdots & x_{i-1,j}^{(n)} & x_{i-1,j+1}^{(n)} & \cdots & & g_{i-1}^{(n)} & \cdots & x_{i-1,j}^{(n+1)} & \cdots 
\\
& x_{i,j}^{(n)} & x_{i,j+1}^{(n)} & & & g_{i}^{(n)} & & [entry-box] x_{i,j}^{(n+1)} & 
\\
& \vdots & \vdots & & & \vdots & & \vdots & 
\\
\ddots & x_{j,j}^{(n)} & x_{j,j+1}^{(n)} & \cdots & & g_{j}^{(n)} & \ddots & x_{j,j}^{(n+1)} & \cdots
\\
& y_{j}^{(n)} & x_{j+1,j+1}^{(n)} & & & g_{j+1}^{(n)} & & y_{j}^{(n+1)} & 
\\
 & & y_{j+1}^{(n)} & \ddots & & \vdots & & & \ddots \\
};
%
\draw[dashed](A-1-1.south west)--(A-1-9.south east);
\draw[rounded corners=2pt](A-1-1.north west)--(A-9-1.south west)--(A-9-9.south east)--(A-1-9.north east)--cycle;
\draw[](A-1-5.north west)--(A-9-5.south west);
\draw[](A-1-7.north west)--(A-9-7.south west);
%
\fill[fill=lightgray,opacity=0.3](A-5-2.north west)--(A-5-3.north east)--(A-5-3.south east)--(A-5-2.south west)--cycle;
\fill[fill=lightgray,opacity=0.3](A-8-2.north west)--(A-8-3.north east)--(A-8-3.south east)--(A-8-2.south west)--cycle;
\fill[fill=lightgray,opacity=0.3](A-4-6.north west)--(A-4-6.south west)--(A-4-6.south east)--(A-4-6.north east)--cycle;
\fill[fill=lightgray,opacity=0.3](A-7-6.north west)--(A-7-6.south west)--(A-7-6.south east)--(A-7-6.north east)--cycle;
\fill[fill=lightgray,opacity=0.3](A-4-8.north west)--(A-4-8.south west)--(A-4-8.south east)--(A-4-8.north east)--cycle;
\fill[fill=lightgray,opacity=0.3](A-7-8.north west)--(A-7-8.south west)--(A-7-8.south east)--(A-7-8.north east)--cycle;
\fill[fill=lightgray,opacity=0.3](A-9-3.north west)--(A-9-3.south west)--(A-9-3.south east)--(A-9-3.north east)--cycle;
\draw[thick] (A-5-8.north west)--(A-5-8.south west)--(A-5-8.south east)--(A-5-8.north east)--cycle;
\draw[thick] (A-8-8.north west)--(A-8-8.south west)--(A-8-8.south east)--(A-8-8.north east)--cycle;
\draw[thick] (A-8-6.north west)--(A-8-6.south west)--(A-8-6.south east)--(A-8-6.north east)--cycle;
%
\draw[->](A-5-3) to [out=15,in=165] node {} (A-5-8);
\draw[->](A-8-3) to [out=15,in=165] node {} (A-8-8);
\draw[->](A-7-6) to [out=180,in=180] node {} (A-8-6);
\draw[->](A-4-8) to [out=0,in=0] node {} (A-5-8);
\draw[->](A-7-8) to [out=0,in=0] node {} (A-8-8);
\draw[->](A-4-6) to [out=0,in=180] node {} (A-5-8);
\draw[->](A-7-6) to [out=0,in=180] node {} (A-8-8);
\draw[->](A-7-6) to [out=0,in=180] node {} (A-5-8);
\draw[->](A-8-8) to [out=270,in=270] node {} (A-8-6);
\draw[->](A-9-3) to [out=0,in=180] node {} (A-8-6);
%
\end{tikzpicture}
\]
\caption{Discrete-time evolution in the extended $q$-discrete Toda equation \eqref{eqn:extended_qdToda}. \label{fig:diagram_time_evolution}}
\end{center}
\end{figure}
\par
Now, we introduce $m$-by-$m$ Hessenberg matrices:
\begin{equation}
A^{(n)} \coloneqq
\left( \begin{array}{cccccc}
x_{1,1}^{(n)} & \cdots & x_{1,\ell_{1}}^{(n)} & 1 & & \\
y_{1}^{(n)} & x_{2,2}^{(n)} & \cdots & x_{2,\ell_{2}}^{(n)} & \ddots & \\
 & y_{2}^{(n)} & x_{3,3}^{(n)} & \ddots & \ddots & 1 \\
 & & \ddots & \ddots & \ddots & x_{m-M+1,m}^{(n)} \\
 & & & \ddots & \ddots & \vdots \\
 & & & & y_{m-1}^{(n)} & x_{m,m}^{(n)} \\
\end{array} \right),
\label{eq:An} 
\end{equation}
and $m$-by-$m$ lower bidiagonal matrices:
\[
G^{(n)} \coloneqq
\left( \begin{array}{cccc}
0 & & & \\
g_{1}^{(n)} & 0 & & \\
 & \ddots & \ddots & \\
 & & g_{m-1}^{(n)} & 0
\end{array} \right). 
\]
Then, we can represent the extended $q$-discrete Toda equation \eqref{eqn:extended_qdToda} as a matrix equation. 
\begin{theorem}
\label{thm:similarity_transformation}
A matrix representation of the extended $q$-discrete Toda equation \eqref{eqn:extended_qdToda} is given as:
\begin{equation}
\label{eqn:similarity_transformation}
A^{(n+1)}=A^{(n)}+\mu^{(n)}\left(A^{(n)}G^{(n)}-G^{(n)}A^{(n+1)}\right).
\end{equation}
\end{theorem}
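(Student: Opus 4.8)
The plan is to prove \eqref{eqn:similarity_transformation} by a componentwise comparison: I would read off the $(i,j)$ entry of each side and match it against the three scalar recurrences of \eqref{eqn:extended_qdToda} together with the auxiliary relations \eqref{eqn:extended_qdToda_g}. Writing $a_{i,j}^{(n)}$ for the $(i,j)$ entry of $A^{(n)}$ in \eqref{eq:An}, the first step is to use that $G^{(n)}$ is lower bidiagonal with $(G^{(n)})_{k+1,k}=g_k^{(n)}$. A one-line index shift then gives
\[
\bigl(A^{(n)}G^{(n)}\bigr)_{i,j}=a_{i,j+1}^{(n)}\,g_j^{(n)},\qquad \bigl(G^{(n)}A^{(n+1)}\bigr)_{i,j}=g_{i-1}^{(n)}\,a_{i-1,j}^{(n+1)},
\]
so that \eqref{eqn:similarity_transformation} is equivalent to the single scalar identity $a_{i,j}^{(n+1)}=a_{i,j}^{(n)}+\mu^{(n)}\bigl(a_{i,j+1}^{(n)}g_j^{(n)}-g_{i-1}^{(n)}a_{i-1,j}^{(n+1)}\bigr)$ holding for every pair $(i,j)$. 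It then remains to check this identity in each region of the band structure of $A^{(n)}$.

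For the entries carrying dynamical variables the check reproduces \eqref{eqn:extended_qdToda} directly. On the diagonal and the interior of the upper band, i.e.\ $i\le j\le\ell_{i-1}$, both shifted neighbors $a_{i,j+1}^{(n)}=x_{i,j+1}^{(n)}$ and $a_{i-1,j}^{(n+1)}=x_{i-1,j}^{(n+1)}$ still lie in the band, and the identity becomes the first line of \eqref{eqn:extended_qdToda}. At the last generic band entry $j=\ell_i=i+M-1$ with $i\le m-M$ the two neighbors are the fixed ones $a_{i,i+M}^{(n)}=1$ and $a_{i-1,(i-1)+M}^{(n+1)}=1$, so the identity collapses to the second line of \eqref{eqn:extended_qdToda}. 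Finally, on the subdiagonal $(i+1,i)$ the neighbors are the diagonal entries $x_{i+1,i+1}^{(n)}$ and $x_{i,i}^{(n+1)}$, the common factor $g_{(i+1)-1}^{(n)}=g_i^{(n)}$ comes out, and I recover the third line of \eqref{eqn:extended_qdToda}.

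For the structurally constant entries the identity must reduce to a triviality. The fixed ones on the $M$-th superdiagonal, $a_{i,i+M}^{(n)}=a_{i,i+M}^{(n+1)}=1$, are preserved because both $a_{i,i+M+1}^{(n)}$ and $a_{i-1,i+M}^{(n+1)}$ lie beyond the band and vanish, together with $g_0^{(n)}=0$ at the top. The zeros strictly below the subdiagonal are the crucial case: at $(i,j)=(q+2,q)$ the identity reads $0=\mu^{(n)}\bigl(y_{q+1}^{(n)}g_q^{(n)}-g_{q+1}^{(n)}y_q^{(n+1)}\bigr)$, which is precisely the multiplicative recurrence $g_{q+1}^{(n)}=(y_{q+1}^{(n)}/y_q^{(n+1)})\,g_q^{(n)}$ of \eqref{eqn:extended_qdToda_g}; for $i\ge j+3$ both shifted neighbors already vanish and the identity is $0=0$, and the remaining zeros above the band are trivial in the same way.

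The part I expect to demand the most care is the boundary bookkeeping: confirming that the index ranges of the three lines of \eqref{eqn:extended_qdToda} together tile \emph{exactly} the $x$- and $y$-entries of \eqref{eq:An} once the truncated bottom-right band ($\ell_i=m$) is accounted for. There the transitional row $i=m-M+1$ uses only the single one $a_{i-1,(i-1)+M}^{(n+1)}=1$ while $g_{\ell_i}=g_m^{(n)}=0$ disposes of the missing neighbor, and for rows $i\ge m-M+2$ the second line no longer applies, so the last band entry $x_{i,m}$ is instead captured by the first line at $j=\ell_{i-1}=m$, consistency being restored by the conventions $x_{i,m+1}^{(n)}\coloneqq0$ and $g_m^{(n)}=0$; the top edge uses $x_{0,j}^{(n)}\coloneqq0$, $g_0^{(n)}=0$, and $y_0^{(n)}=y_m^{(n)}=0$. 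I anticipate this matching of index ranges, rather than any single algebraic manipulation, to be the only genuine obstacle. One should also note that the defining relation $g_1^{(n)}=y_1^{(n)}/(1+\mu^{(n)}x_{1,1}^{(n)})$ in \eqref{eqn:extended_qdToda_g} is an input that closes the auxiliary variables and is not pinned by any entry of \eqref{eqn:similarity_transformation}; only the recurrence for $g_i^{(n)}$ with $i\ge2$ is forced, namely by the vanishing of the second-subdiagonal entries.
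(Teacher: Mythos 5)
Your proposal is correct and takes essentially the same route as the paper's proof: an entrywise comparison of $A^{(n)}G^{(n)}-G^{(n)}A^{(n+1)}$ against the three scalar recurrences of \eqref{eqn:extended_qdToda}, with the second-subdiagonal entries annihilated by the auxiliary recurrence $g_{i+1}^{(n)}=(y_{i+1}^{(n)}/y_{i}^{(n+1)})g_{i}^{(n)}$ from \eqref{eqn:extended_qdToda_g}. Your boundary and truncation bookkeeping is simply more explicit than the paper's, and your closing remark that only the $i\ge 2$ part of \eqref{eqn:extended_qdToda_g} is forced by the matrix identity is accurate.
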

\begin{proof}
Observing the $(i,i), (i,i+1), \ldots, (i,\ell_{i-1})$, and $(i,\ell_{i})$ entries of $A^{(n)}G^{(n)}-G^{(n)}A^{(n+1)}$, we easily derive 
$x_{i,i+1}^{(n)}g_{i}^{(n)}-g_{i-1}^{(n)}x_{i-1,i}^{(n+1)}, x_{i,i+2}^{(n)}g_{i+1}^{(n)}-g_{i-1}^{(n)}x_{i-1,i+1}^{(n+1)}, \ldots,$ 
$x_{i,\ell_{i}}^{(n)}g_{\ell_{i-1}}^{(n)}-g_{i-1}^{(n)} x_{i-1,\ell_{i-1}}^{(n+1)}$, and $g_{\ell_{i}}^{(n)}-g_{i-1}^{(n)}$, respectively. 
The $(i+2,i)$ and $(i+1,i)$ entries of $A^{(n)}G^{(n)}-G^{(n)}A^{(n+1)}$ are, respectively, $y_{i+1}^{(n)}g_{i}^{(n)}-g_{i+1}^{(n)}y_{i}^{(n+1)}$ and $x_{i+1,i+1}^{(n)}g_{i}^{(n)}-g_{i}^{(n)}x_{i,i}^{(n+1)}$. 
The other entries of $A^{(n)}G^{(n)}-G^{(n)}A^{(n+1)}$ are all $0$. 
Moreover, it follows from the second equation of \eqref{eqn:extended_qdToda_g} that $y_{i+1}^{(n)}g_{i}^{(n)}-g_{i+1}^{(n)}y_{i}^{(n+1)}=0$. 
Thus, the extended $q$-discrete Toda equation \eqref{eqn:extended_qdToda} satisfies the matrix representation \eqref{eqn:similarity_transformation}. 
\end{proof}
\par
Since ${\rm det}(I+\mu^{(n)}G^{(n)})=1$, where $I$ denotes the $m$-by-$m$ identity matrix, the inverse matrix $(I+\mu^{(n)}G^{(n)})^{-1}$ always exists. Theorem \ref{thm:similarity_transformation} thus leads to:
\[
A^{(n+1)}=(I+\mu^{(n)}G^{(n)})^{-1}A^{(n)}(I+\mu^{(n)}G^{(n)}), 
\]
which implies that the extended $q$-discrete Toda equation \eqref{eqn:extended_qdToda} generates a similarity transformation of the Hessenberg matrix $A^{(n)}$.
%
\par
We now derive a continuous analogue of the extended $q$-discrete Toda equation \eqref{eqn:extended_qdToda}. 
By setting $x_{i,j}^{(n)}\coloneqq x_i(t^{(n)};q),y_i^{(n)}\coloneqq y_i(t^{(n)};q),g_i^{(n)}\coloneqq g_i(t^{(n)};q)$, and $\mu^{(n)}\coloneqq t^{(n+1)}-t^{(n)}$ and taking the limit as $q\to1$ of the extended $q$-discrete Toda equation \eqref{eqn:extended_qdToda}, we obtain:
\begin{align}
\label{eqn:continuous_extended_qdToda}
\left\{ 
\begin{aligned}
& \dfrac{x_{i,j}(t)}{dt} = x_{i,j+1}(t) y_{j}(t) - y_{i-1}(t) x_{i-1,j}(t),\quad i=1,2,\ldots,m, \quad j=i,i+1,\ldots,\ell_{i-1}, \\ 
& \dfrac{x_{i,\ell_i}(t)}{dt} = y_{\ell_{i}}(t)-y_{i-1}(t),\quad i=1,2,\ldots,m-M+1, \\ 
& \dfrac{y_{i}(t)}{dt}= y_{i}(t) \left( x_{i+1,i+1}(t) - x_{i,i}(t) \right),\quad i=1,2,\ldots,m-1, \\
& x_{0,1}(t) \coloneqq 0, \quad x_{0,2}(t) \coloneqq 0, \quad \ldots, \quad x_{0,M-1}(t) \coloneqq 0, \\
& x_{m-M+2,m+1}(t) \coloneqq 0, \quad x_{m-M+3,m+1}(t) \coloneqq 0, \quad \ldots, \quad x_{m,m+1}(t) \coloneqq 0, \\
& y_{0}(t) \coloneqq 0, \quad y_{m}(t) \coloneqq 0.
\end{aligned}
\right.
\end{align}
Moreover, by introducing new variables given using the extended Toda variables $x_{i,i}(t)$ and $y_{i}(t)$ as:
\begin{align}
\label{eqn:KostantToda_extended_qdToda}
\left\{ 
\begin{aligned}
& \check{x}_{i}(t) = x_{i,i}(t),\quad i=1,2,\ldots,m, \\ 
& \check{y}_{i,j}(t) = y_{i}(t) y_{i+1}(t) \cdots y_{j-1}(t) x_{i,j}(t),\quad i=1,2,\ldots,m,\quad j=i+1,i+2,\ldots,\ell_{i}, \\ 
& \check{y}_{i,\ell_{i+1}}(t) = y_{i}(t) y_{i+1}(t) \cdots y_{\ell_i}(t),\quad i=1,2,\ldots,m, \\
& \check{y}_{0,1}(t) \coloneqq 0, \quad \check{y}_{0,2}(t) \coloneqq 0, \quad \ldots, \quad \check{y}_{0,M}(t) \coloneqq 0, \\
& \check{y}_{m-M+1,m+1}(t) \coloneqq 0, \quad \check{y}_{m-M+2,m+1}(t) \coloneqq 0, \quad \ldots, \quad \check{y}_{m,m+1}(t) \coloneqq 0, \\
\end{aligned}
\right.
\end{align}
we obtain the following theorem for a continuous analogue of the extended $q$-discrete Toda equation \eqref{eqn:extended_qdToda}. 
\begin{theorem}
\label{thm:KostantToda}
The continuous dynamics with respect to $\check{x}_i(t)$ and $\check{y}_{i,j}(t)$ satisfy the matrix representation:
\begin{align}
\label{eqn:Lax_KToda}
&\dfrac{d\check{A}(t)}{dt}=\check{A}(t)\check{A}_{\leq}(t)-\check{A}_{\leq}(t)\check{A}(t), 
\\
&\check{A}(t) \coloneqq
\left( \begin{array}{cccccc}
\check{x}_{1}(t) & \cdots & \check{y}_{1,\ell_{1}}(t) &\check{y}_{1,\ell_{2}}(t) & & \\
1 & \check{x}_{2}(t) & \cdots & \check{y}_{2,\ell_{2}}(t) & \ddots & \\
 & 1 & \check{x}_{3}(t) & \ddots & \ddots & \check{y}_{m-M,m}(t)  \\
 & & \ddots & \ddots & \ddots & \check{y}_{m-M+1,m}(t) \\
 & & & \ddots & \ddots & \vdots \\
 & & & & 1 & \check{x}_{m}(t) \\
\end{array} \right),
\nonumber
\\
&\check{A}_{\le}(t) \coloneqq
\left( \begin{array}{cccc}
\check{x}_{1}(t) & & \\
1 & \check{x}_{2}(t) &  \\
 & \ddots & \ddots  \\
 & & 1 & \check{x}_{m}(t) \\
\end{array} \right).
\nonumber
\end{align}
\end{theorem}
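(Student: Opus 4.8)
The plan is to verify the Lax equation \eqref{eqn:Lax_KToda} entrywise, comparing the time-derivative of each entry of $\check{A}(t)$, obtained from the change of variables \eqref{eqn:KostantToda_extended_qdToda} together with the continuous system \eqref{eqn:continuous_extended_qdToda}, against the corresponding entry of the commutator $\check{A}\check{A}_{\le}-\check{A}_{\le}\check{A}$. First I would prepare the right-hand side by splitting $\check{A}=\check{A}_{\le}+\check{A}_{>}$, where $\check{A}_{>}$ is the strictly upper-triangular part carrying the $\check{y}_{i,j}$ entries; then $\check{A}\check{A}_{\le}-\check{A}_{\le}\check{A}=\check{A}_{>}\check{A}_{\le}-\check{A}_{\le}\check{A}_{>}$ since $\check{A}_{\le}$ commutes with itself. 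Decomposing further $\check{A}_{\le}=D+N$, with $D=\mathrm{diag}(\check{x}_{1},\dots,\check{x}_{m})$ and $N$ the subdiagonal shift $N_{i+1,i}=1$, a direct computation of $\check{A}_{>}D-D\check{A}_{>}$ and $\check{A}_{>}N-N\check{A}_{>}$ gives the $(i,j)$ entry $(\check{x}_{j}-\check{x}_{i})\check{y}_{i,j}+\check{y}_{i,j+1}-\check{y}_{i-1,j}$ for $j>i$, the diagonal entry $\check{y}_{i,i+1}-\check{y}_{i-1,i}$, and zero on and below the subdiagonal, once the convention that $\check{y}$-symbols outside the band vanish is adopted. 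This last structural fact, that $\check{A}_{>}N-N\check{A}_{>}$ is supported on columns $j\ge i$, is what makes the commutator compatible with the constant subdiagonal $1$'s of $\check{A}$.

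Next I would differentiate \eqref{eqn:KostantToda_extended_qdToda}. Using the third line of \eqref{eqn:continuous_extended_qdToda}, namely $\dot{y}_{k}=y_{k}(x_{k+1,k+1}-x_{k,k})$, the logarithmic derivative of the product $y_{i}\cdots y_{j-1}$ telescopes to $x_{j,j}-x_{i,i}=\check{x}_{j}-\check{x}_{i}$, so that $\dot{\check{y}}_{i,j}=(\check{x}_{j}-\check{x}_{i})\check{y}_{i,j}+(y_{i}\cdots y_{j-1})\dot{x}_{i,j}$. I would then substitute the appropriate $x$-relation: for $i\le j\le\ell_{i}-1$ the first line of \eqref{eqn:continuous_extended_qdToda}, which turns $(y_{i}\cdots y_{j-1})\dot{x}_{i,j}$ into $\check{y}_{i,j+1}-\check{y}_{i-1,j}$; for $j=\ell_{i}$ the second line $\dot{x}_{i,\ell_{i}}=y_{\ell_{i}}-y_{i-1}$, which again yields $\check{y}_{i,\ell_{i}+1}-\check{y}_{i-1,\ell_{i}}$ after recognising $y_{i}\cdots y_{\ell_{i}}$ and $y_{i-1}\cdots y_{\ell_{i}-1}$ as the pure-product (band-edge) symbols $\check{y}_{i,\ell_{i+1}}$ and $\check{y}_{i-1,\ell_{i}}$; and for the band-edge entry $\check{y}_{i,\ell_{i+1}}=y_{i}\cdots y_{\ell_{i}}$ itself, direct differentiation gives only the telescoping term $(\check{x}_{\ell_{i}+1}-\check{x}_{i})\check{y}_{i,\ell_{i+1}}$, the remaining two terms vanishing because they fall outside the band. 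The diagonal derivative $\dot{\check{x}}_{i}=\dot{x}_{i,i}$ reduces to $\check{y}_{i,i+1}-\check{y}_{i-1,i}$ via either the first line (when $M\ge2$) or the second line (when $M=1$). In every case the result matches the commutator entry computed in the first step.

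The main obstacle I expect is not the algebra but the bookkeeping forced by the band structure $\ell_{i}=\min(i+M-1,m)$: I must keep straight which of the three $x$-relations governs each column, treat the band-edge entries uniformly with the generic ones (the ``$1$'' superdiagonal of $A^{(n)}$ becomes the pure $y$-product $\check{y}_{i,\ell_{i+1}}$, equivalently $x_{i,\ell_{i}+1}=1$), and handle the rows and columns truncated by $m$ together with the boundary conventions $\check{y}_{0,\cdot}=0$ and $\check{y}_{\cdot,m+1}=0$ from \eqref{eqn:KostantToda_extended_qdToda}. Once these index ranges are pinned down, the verification is a routine term-by-term comparison.
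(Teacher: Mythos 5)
Your proposal is correct, but it takes a genuinely different route from the paper. You prove the Lax equation by direct entrywise verification: splitting $\check{A}_{\le}=D+N$ into diagonal and shift parts, computing the commutator entries $(\check{x}_{j}-\check{x}_{i})\check{y}_{i,j}+\check{y}_{i,j+1}-\check{y}_{i-1,j}$, and matching them against $\dot{\check{y}}_{i,j}$ obtained from the scalar system \eqref{eqn:continuous_extended_qdToda} via the telescoping logarithmic derivative; your case analysis (generic band entries via the first line, last-band-column entries via the second line, pure-product band-edge entries, truncated rows, and the $M=1$ versus $M\ge2$ distinction on the diagonal) is exactly the bookkeeping needed, and it closes. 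The paper instead works entirely at the matrix level: it first records that the $q\to1$ limit of the matrix representation \eqref{eqn:similarity_transformation} gives the Lax form $dA/dt=A(t)B(t)-B(t)A(t)$ for the untransformed variables, then packages the change of variables \eqref{eqn:KostantToda_extended_qdToda} as a diagonal gauge transformation $\check{A}(t)=D(t)^{-1}A(t)D(t)$ with $D(t)={\rm diag}\bigl(1,y_1(t),y_1(t)y_2(t),\ldots,\prod_{i=1}^{m-1}y_i(t)\bigr)$, differentiates this conjugation, and shows by the same telescoping identity that $D^{-1}BD+D^{-1}\,dD/dt=\check{A}_{\le}(t)-\check{x}_{1}(t)I$, the scalar multiple of $I$ dropping out of the commutator. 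The paper's route buys brevity and a conceptual statement (the Kostant Toda flow is a diagonal gauge transform of the continuous extended system), reusing Theorem \ref{thm:similarity_transformation} and avoiding all band-structure casework; your route buys self-containedness, since it never needs the matrix form \eqref{eqn:Lax_continuous_extended_qdToda} or the $q\to1$ limit argument, at the cost of the index bookkeeping you correctly anticipate.
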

\begin{proof}
Taking the limit as $q\to1$ in the matrix representation \eqref{eqn:similarity_transformation}, we derive:
\begin{align}
\label{eqn:Lax_continuous_extended_qdToda}
&\dfrac{dA(t)}{dt}=A(t)B(t)-B(t)A(t), 
\\
&A(t) \coloneqq
\left( \begin{array}{cccccc}
x_{1,1}(t) & \cdots & x_{1,\ell_{1}}(t) & 1 & & \\
y_{1}(t) & x_{2,2}(t) & \cdots & x_{2,\ell_{2}}(t) & \ddots & \\
 & y_{2}(t) & x_{3,3}(t) & \ddots & \ddots & 1 \\
 & & \ddots & \ddots & \ddots & x_{m-M+1,m}(t) \\
 & & & \ddots & \ddots & \vdots \\
 & & & & y_{m-1}(t) & x_{m,m}(t) \\
\end{array} \right), 
\nonumber
\\
&B(t) \coloneqq
\left( \begin{array}{cccc}
0 & & & \\
y_{1}(t) & 0 & & \\
 & \ddots & \ddots & \\
 & & y_{m-1}(t) & 0
\end{array} \right). 
\nonumber
\end{align}
We can easily check that \eqref{eqn:Lax_continuous_extended_qdToda} is just the matrix representation of \eqref{eqn:continuous_extended_qdToda}. 
Preparing $D(t)\coloneqq{\rm diag}(1,y_1(t),y_1(t)y_2(t),\ldots,$ $\prod_{i=1}^{m-1}y_{i}(t))$ and considering \eqref{eqn:KostantToda_extended_qdToda}, we see that $\check{A}(t)=D(t)^{-1}A(t)D(t)$. 
Considering the derivative with respective to $t$ of $\check{A}(t)$ and using \eqref{eqn:Lax_continuous_extended_qdToda}, we obtain:
\[
\dfrac{d\check{A}(t)}{dt}=\dfrac{dD(t)^{-1}}{dt}A(t)D(t)+D(t)^{-1}\left(A(t)B(t)-B(t)A(t)\right)D(t)+D(t)^{-1}A(t)\dfrac{dD(t)}{dt}.
\]
Noting that $dD(t)^{-1}/dt=-D(t)^{-1}(dD(t)/dt)D(t)^{-1}$, we can rewrite this as:
\begin{align}
\label{eqn:Lax_continuous_KToda}
\begin{aligned}
\dfrac{d\check{A}(t)}{dt}
&=D(t)^{-1}A(t)D(t) \left( D(t)^{-1}B(t)D(t) + D(t)^{-1}\dfrac{dD(t)}{dt} \right)\\
&\quad -\left( D(t)^{-1}B(t)D(t) + D(t)^{-1}\dfrac{dD(t)}{dt} \right) D(t)^{-1}A(t)D(t).
\end{aligned}
\end{align}
Obviously, 
\begin{align*}
& (D(t))^{-1}B(t)D(t)=
\left( \begin{array}{cccc}
0 & & & \\
1 & 0 & & \\
 & \ddots & \ddots & \\
 & & 1 & 0
\end{array} \right),
\\
& (D(t))^{-1}\dfrac{dD(t)}{dt}=
{\rm diag}\left(0,\dfrac{1}{y_1(t)}\dfrac{dy_1(t)}{dt},\dots,
\dfrac{1}{y_1(t)y_2(t)\cdots  y_{m-1}(t)} \dfrac{d(y_1(t)y_2(t)\cdots  y_{m-1}(t))}{dt} 
\right).
\end{align*}
Here, by using the third equation of \eqref{eqn:continuous_extended_qdToda} and the first equation of \eqref{eqn:KostantToda_extended_qdToda}, we obtain: 
\begin{align*}
\dfrac{1}{y_{1}(t)\cdots y_{i-1}(t) } \dfrac{d( y_{1}(t)\cdots y_{i-1}(t) )}{dt} 
&= \sum_{j=1}^{i-1} \left( x_{j+1,j+1}(t)-x_{j,j}(t) \right),\\
&= \check{x}_{i}(t)-\check{x}_{1}(t).
\end{align*}
Thus, it follows that:
\begin{equation}
\label{eqn:KToda_proof_eq}
(D(t))^{-1}B(t)D(t) + D(t)^{-1}\dfrac{dD(t)}{dt}
=\check{A}_{\le}(t)-\check{x}_1(t)I.
\end{equation}
Considering $(D(t))^{-1}A(t)D(t)=\check{A}(t)$ and \eqref{eqn:KToda_proof_eq} in \eqref{eqn:Lax_continuous_KToda}, we therefore have \eqref{eqn:Lax_KToda}.
\end{proof}
\noindent
Theorem \ref{thm:KostantToda} implies that a continuous analogue of the extended $q$-discrete Toda equation \eqref{eqn:extended_qdToda} is the Kostant Toda equation. 
%
\section{Interpretation of Hessenberg $\mbox{\boldmath $LR$}$ transformations}
\label{sec:3}
%
In this section, we relate the extended $q$-discrete Toda equation \eqref{eqn:extended_qdToda} to shifted $LR$ transformations of Hessenberg matrices. 
\par
We first show that a similarity transformation on a Hessenberg matrix by a lower bidiagonal matrix $L$ is uniquely determined by the first subdiagonal entry of $L$. This is an analogue of the famous implicit $Q$ theorem \cite{Golub_1996} on the uniqueness of a $QR$ transformation that preserves the Hessenberg structure, and we call it the implicit $L$ theorem.
\begin{theorem}[Implicit $L$ theorem]
\label{thm:implicit_L_theorem}
Let us assume that $A\coloneqq(a_{i,j})$ and $A^\ast\coloneqq(a_{i,j}^\ast)$ are $m$-by-$m$ Hessenberg matrices with $a_{2,1}\ne0,a_{3,2}\ne0,\ldots,a_{m-1,m-2}\ne0$ and $a_{m,m-1}\ne0$. 
If $L\coloneqq(\ell_{i,j})$ is a lower bidiagonal matrix with $1$ on all diagonal entries that satisfies:
\begin{equation}
\label{eqn:implicit_L}
LA^\ast=AL,
\end{equation}
then $L$ and $A^\ast$ are uniquely determined from $A$ and $\ell_{2,1}$.
Moreover, it holds that $a_{2,1}^\ast\neq0,a_{3,2}^\ast\neq0,\ldots,a_{m-1,m-2}^\ast\neq0$.
\end{theorem}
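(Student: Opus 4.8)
My plan is to convert the single matrix identity \eqref{eqn:implicit_L} into its scalar entrywise form and then solve the resulting equations in a carefully chosen order, treating $\ell_{2,1}$ as the only free datum. First I would expand \eqref{eqn:implicit_L} entrywise. Since $L$ is unit lower bidiagonal, left multiplication picks up only the diagonal and first subdiagonal, while right multiplication by $L$ combines columns $j$ and $j+1$; concretely, comparing $(i,j)$ entries gives $a_{i,j}^\ast+\ell_{i,i-1}a_{i-1,j}^\ast=a_{i,j}+a_{i,j+1}\ell_{j+1,j}$, under the conventions $\ell_{1,0}=0$, $a_{0,j}^\ast=0$, $a_{i,m+1}=0$, $\ell_{m+1,m}=0$, and using that both $A$ and $A^\ast$ are Hessenberg (so $a_{i,j}=a_{i,j}^\ast=0$ whenever $i>j+1$). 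A quick check shows every equation with $i>j+2$ reduces to $0=0$, so the only informative equations in column $j$ are those with $1\le i\le j+2$.

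Before attacking uniqueness I would dispatch the ``moreover'' claim, which I expect to be the cleanest part. Fix $k$ with $1\le k\le m-2$ and suppose, for contradiction, that $a_{k+1,k}^\ast=0$. The $(k+2,k)$ equation collapses (both $a_{k+2,k}^\ast$ and $a_{k+2,k}$ vanish by the Hessenberg structure) to $\ell_{k+2,k+1}a_{k+1,k}^\ast=a_{k+2,k+1}\ell_{k+1,k}$; with $a_{k+1,k}^\ast=0$ and $a_{k+2,k+1}\ne0$ this forces $\ell_{k+1,k}=0$. Feeding $a_{k+1,k}^\ast=0$ and $\ell_{k+1,k}=0$ into the $(k+1,k)$ equation leaves $0=a_{k+1,k}$, contradicting $a_{k+1,k}\ne0$. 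Hence $a_{k+1,k}^\ast\ne0$ for every $k\le m-2$; note the argument stops exactly at $k=m-2$, since the $(k+2,k)$ equation requires $k+2\le m$, which is consistent with the statement asserting nothing about $a_{m,m-1}^\ast$.

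Finally I would establish uniqueness by a column-by-column recursion, which is the step needing the most bookkeeping. The induction hypothesis is that $\ell_{2,1},\dots,\ell_{j+1,j}$ and columns $1,\dots,j-1$ of $A^\ast$ are already forced; the base case $j=1$ uses only the given $\ell_{2,1}$. For column $j$, the equations $i=1,\dots,j$ determine $a_{1,j}^\ast,\dots,a_{j,j}^\ast$ from top to bottom (each $a_{i,j}^\ast=a_{i,j}+a_{i,j+1}\ell_{j+1,j}-\ell_{i,i-1}a_{i-1,j}^\ast$ has a fully known right-hand side), the $(j+1,j)$ equation then fixes $a_{j+1,j}^\ast$, and, when $j\le m-2$, the $(j+2,j)$ equation yields $\ell_{j+2,j+1}=a_{j+2,j+1}\ell_{j+1,j}/a_{j+1,j}^\ast$, a division that is legitimate precisely because of the nonvanishing just proved. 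The remaining equations with $i>j+2$ are automatically satisfied, so no overdetermination occurs. Carrying this through column $m$ shows that $L$ and $A^\ast$ are uniquely pinned down by $A$ and $\ell_{2,1}$. The main obstacle throughout is not any single estimate but the ordering and boundary-index bookkeeping---verifying that each unknown is reached before it is needed, and that the degenerate first and last columns (where $\ell_{1,0}$ or $\ell_{m+1,m}$ would otherwise appear) behave correctly; this is also what makes the result a genuine analogue of the implicit $Q$ theorem.
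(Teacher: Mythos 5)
Your proposal is correct, and its overall route coincides with the paper's: expand $LA^\ast=AL$ entrywise, establish the nonvanishing of the subdiagonal entries $a^\ast_{k+1,k}$, and then run a column-by-column forward recursion in which column $j$ of $A^\ast$ is obtained by forward substitution (equivalently, by inverting the unit lower-triangular block $(L)_{1:j+1}^{1:j+1}$, which is how the paper phrases it) and $\ell_{j+2,j+1}$ is then read off from the $(j+2,j)$ relation $\ell_{j+2,j+1}a^\ast_{j+1,j}=a_{j+2,j+1}\ell_{j+1,j}$. The one genuine divergence is in the ``moreover'' part: the paper splits into the cases $\ell_{2,1}=0$ and $\ell_{2,1}\neq0$ and propagates forward by induction, showing along the way that the subdiagonal of $L$ is either identically zero or nowhere zero, whereas you prove $a^\ast_{k+1,k}\neq0$ for each $k\le m-2$ by a purely local contradiction (assuming $a^\ast_{k+1,k}=0$ forces $\ell_{k+1,k}=0$ via the $(k+2,k)$ equation and then $a_{k+1,k}=0$ via the $(k+1,k)$ equation). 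Your version is shorter and avoids both the induction and the case split, at the cost of not recording the all-or-nothing dichotomy for the subdiagonal of $L$; that extra information is not needed for the theorem, so nothing is lost. Your remark that the argument terminates exactly at $k=m-2$, matching the fact that the statement asserts nothing about $a^\ast_{m,m-1}$, is also accurate.
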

\begin{proof}
The matrix equation \eqref{eqn:implicit_L} immediately leads to:
\begin{align}
\label{eqn:implicit_L_lower}
\left\{
\begin{aligned}
&\ell_{i+1,i}a_{i,i}^\ast+a_{i+1,i}^\ast=a_{i+1,i}+a_{i+1,i+1}\ell_{i+1,i},\quad i=1,2,\ldots,m-1,\\
&\ell_{i+2,i+1}a_{i+1,i}^\ast=a_{i+2,i+1}\ell_{i+1,i},\quad i=1,2,\ldots,m-2.
\end{aligned}
\right.
\end{align}
\par
We first consider the case where $\ell_{2,1}=0$. If $\ell_{i+1,i}=0$ for some $i$, then it follows from the first equation of \eqref{eqn:implicit_L_lower} that $a^\ast_{i+1,i}=a_{i+1,i}\ne0$. Combining this with the second equation of \eqref{eqn:implicit_L_lower}, we obtain $\ell_{i+2,i+1}=0$. Thus, for $i=1,2,\dots,m-1$, we recursively have $\ell_{3,2}=0,\ell_{4,3}=0,\ldots,\ell_{m,m-1}=0$ and $a^\ast_{2,1}=a_{2,1}\ne0,a^\ast_{3,2}=a_{3,2}\ne0,\dots,a^\ast_{m,m-1}=a_{m,m-1}\ne0$. 
\par
Next, we consider the case where $\ell_{2,1}\ne0$. If $\ell_{i+1,i}\ne 0$ for some $i$, then we have $\ell_{i+2,i+1}\neq0$ and $a_{i+1,i}^\ast\ne0$ from the second equation of \eqref{eqn:implicit_L_lower}. 
Thus, for $i=1,2,\dots,m-1$, we recursively have $\ell_{3,2}\ne0,\ell_{4,3}\ne0,\ldots,\ell_{m,m-1}\ne0$, and $a^\ast_{2,1}\ne0,a^\ast_{3,2}\ne0,\dots,a^\ast_{m,m-1}\ne0$. 
\par
Let $\bm{\ell}_i$ and $\bm{a}_i^{\ast}$ denote the $i$th columns of $L$ and $A^{\ast}$, respectively.
Then, we can rewrite the matrix equation \eqref{eqn:implicit_L} as:
\begin{equation}
\label{eqn:implicit_L_vector}
L\bm{a}_i^\ast=A\bm{\ell}_i,\quad i=1,2,\ldots,m.
\end{equation}
Furthermore, let $(\cdot)_{1:i}^{1:i}$ be the $i$-by-$i$ principal submatrix of a matrix and let $(\cdot)_{1:i}$ be the vector consisting of the $1$st, $2$nd, $\ldots,$ $i$th entries of a vector. Observing the first $i+1$ entries of \eqref{eqn:implicit_L_vector}, we obtain:
\begin{equation}
\label{eqn:implicit_L_vector_i_principal}
(L)_{1:i+1}^{1:i+1} (\bm{a}_i^\ast)_{1:i+1} = (A\bm{\ell}_i)_{1:i+1},\quad i=1,2,\dots,m-1.
\end{equation}
Since $\det(L)_{1:i+1}^{1:i+1}=1$, the inverse matrix $((L)_{1:i+1}^{1:i+1})^{-1}$ exists. 
Thus, it follows from \eqref{eqn:implicit_L_vector_i_principal} that:
\[
(\bm{a}_i^\ast)_{1:i+1} = ((L)_{1:i+1}^{1:i+1})^{-1} (A\bm{\ell}_i)_{1:i+1},\quad i=1,2,\dots,m-1.
\]
This shows that the nonzero part of the $i$th column vector of $A^\ast$, denoted by $(\bm{a}_i^\ast)_{1:i+1}$, is uniquely determined if $\bm{\ell}_1,\bm{\ell}_2,\dots,\bm{\ell}_i$, and $A$ are given.
Moreover, by noting that $\ell_{i+2,i+1}$ is uniquely determined from $a_{i+2,i+1}$, $\ell_{i+1,i}$, and $a^{\ast}_{i+1,i}$ as:
\[
\ell_{i+2,i+1}=\dfrac{a_{i+2,i+1}\ell_{i+1,i}}{a_{i+1,i}^\ast},
\]
which is easily derived from the second equation of \eqref{eqn:implicit_L_lower}, we see that $\bm{a}_i^{\ast}$ and $\bm{\ell}_{i+1}$ are uniquely determined if $\bm{\ell}_1,\bm{\ell}_2,\dots,\bm{\ell}_i$ and $A$ are given. 
Thus, by induction for $i=1,2,\dots,m-1$, we can conclude that $L$ and $A^{\ast}$ are uniquely determined for given $\bm{\ell}_1$, to be precise $\ell_{2,1}$, and $A$. 
\end{proof}
\par
Now we apply this theorem to reinterpret our extended $q$-Toda equation \eqref{eqn:extended_qdToda} as a shifted $LR$ transformation. To this end, we consider the $LR$ factorization of a Hessenberg matrix $A^{(n)}+(1/\mu^{(n)})I$ as:
\begin{equation}
\label{eqn:implicit_LR}
A^{(n)}+\dfrac{1}{\mu^{(n)}}I = L^{(n)}R^{(n)},
\end{equation}
where $L^{(n)}$ is a lower bidiagonal matrix with $1$ on the diagonal entries and $R^{(n)}$ is an upper triangular matrix with $1$ on the $(M+1)$th upper diagonal entries. It is easy to see from \eqref{eq:An} that the first subdiagonal entry of $L^{(n)}$ is:
\begin{equation}
\ell_{2,1}^{(n)} = \frac{\mu^{(n)}y_1^{(n)}}{1+\mu^{(n)}x_{1,1}^{(n)}}.
\end{equation}
We also define a new $m$-by-$m$ matrix as:
\begin{equation}
\label{eqn:implicit_RL}
\bar{A}^{(n+1)} = R^{(n)}L^{(n)}-\dfrac{1}{\mu^{(n)}}I.
\end{equation}
Obviously, $\bar{A}^{(n+1)}$ is a Hessenberg matrix with the same form as $A^{(n)}$. The transformation from $A^{(n)}$ to $\bar{A}^{(n+1)}$ is called a shifted $LR$ transformation with shift $-1/\mu^{(n)}$. It follows from \eqref{eqn:implicit_LR} and \eqref{eqn:implicit_RL} that $L^{(n)}\bar{A}^{(n+1)}=A^{(n)}L^{(n)}$. 
We here recall that $I+\mu^{(n)}G^{(n)}$ is also a bidiagonal matrix with $1$ on the diagonal entries satisfying $(I+\mu^{(n)}G^{(n)})A^{(n+1)}=A^{(n)}(I+\mu^{(n)}G^{(n)})$ and $(I+\mu^{(n)}G^{(n)})_{2,1}=\mu^{(n)}y_1^{(n)}/(1+\mu^{(n)}x_{1,1}^{(n)})$. 
Thus, if the lower subdiagonal entries of $A^{(n)}$ are nonzero, then, from Theorem \ref{thm:implicit_L_theorem}, we have:
\begin{align*}
\left\{
\begin{aligned}
&\bar{A}^{(n+1)}=A^{(n+1)},\\
&L^{(n)}=I+\mu^{(n)}G^{(n)}.
\end{aligned}
\right.
\end{align*}
We therefore see that the extended $q$-discrete Toda equation \eqref{eqn:extended_qdToda} generates a shifted $LR$ transformation from $A^{(n)}$ to $A^{(n+1)}$. 
\begin{theorem}
\label{thm:LR_transformation}
Let us assume that the lower subdiagonal entries $y_1^{(n)}, \ldots, y_{m-1}^{(n)}$ of $A^{(n)}$ are nonzero and the $LR$ factorization of $A^{(n)}+(1/\mu^{(n)})I$ is given as $A^{(n)}+(1/\mu^{(n)})I=L^{(n)}R^{(n)}$. Then, the extended $q$-discrete Toda equation \eqref{eqn:extended_qdToda} can be expressed in matrix form as:
\begin{equation}
\label{eqn:LR_transformation}
L^{(n+1)} R^{(n+1)} - \frac{1}{\mu^{(n+1)}} I = R^{(n)} L^{(n)} - \frac{1}{\mu^{(n)}} I.
\end{equation}
\end{theorem}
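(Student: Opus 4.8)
The plan is to recognize that \eqref{eqn:LR_transformation} is not an independent identity but merely a repackaging, in the language of consecutive $LR$ factorizations, of the equality $\bar{A}^{(n+1)}=A^{(n+1)}$ that was already established in the paragraph preceding the theorem. Concretely, I would show that both sides of \eqref{eqn:LR_transformation} are equal to $A^{(n+1)}$ and then simply equate them.

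First I would rewrite the left-hand side. Applying the $LR$ factorization \eqref{eqn:implicit_LR} at index $n+1$, namely $A^{(n+1)}+(1/\mu^{(n+1)})I=L^{(n+1)}R^{(n+1)}$, gives at once $L^{(n+1)}R^{(n+1)}-(1/\mu^{(n+1)})I=A^{(n+1)}$, so the left-hand side of \eqref{eqn:LR_transformation} equals $A^{(n+1)}$. For the right-hand side I would invoke the definition \eqref{eqn:implicit_RL} of the shifted reverse product, which reads $R^{(n)}L^{(n)}-(1/\mu^{(n)})I=\bar{A}^{(n+1)}$; hence the right-hand side of \eqref{eqn:LR_transformation} equals $\bar{A}^{(n+1)}$.

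It then remains to match $\bar{A}^{(n+1)}$ with $A^{(n+1)}$. Here I would use the hypothesis that the subdiagonal entries $y_1^{(n)},\ldots,y_{m-1}^{(n)}$ of $A^{(n)}$ are nonzero, so that the implicit $L$ theorem (Theorem \ref{thm:implicit_L_theorem}) applies. Combining \eqref{eqn:implicit_LR} and \eqref{eqn:implicit_RL} yields $L^{(n)}\bar{A}^{(n+1)}=A^{(n)}L^{(n)}$ (the shift terms cancel), while Theorem \ref{thm:similarity_transformation} gives $(I+\mu^{(n)}G^{(n)})A^{(n+1)}=A^{(n)}(I+\mu^{(n)}G^{(n)})$. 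Since both $L^{(n)}$ and $I+\mu^{(n)}G^{(n)}$ are unit lower bidiagonal with the same first subdiagonal entry $\mu^{(n)}y_1^{(n)}/(1+\mu^{(n)}x_{1,1}^{(n)})$, the uniqueness in Theorem \ref{thm:implicit_L_theorem} forces $\bar{A}^{(n+1)}=A^{(n+1)}$ (and $L^{(n)}=I+\mu^{(n)}G^{(n)}$). Substituting this into the right-hand side gives $A^{(n+1)}$, matching the left-hand side and completing the argument.

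The only genuinely nontrivial ingredient is the uniqueness step that identifies $\bar{A}^{(n+1)}$ with $A^{(n+1)}$, and this has effectively already been carried out before the theorem statement; the proof of \eqref{eqn:LR_transformation} itself is just the assembly of \eqref{eqn:implicit_LR} at step $n+1$, the definition \eqref{eqn:implicit_RL}, and this identification. The one point I would be careful about is keeping the nonzero-subdiagonal hypothesis in force throughout, since without it the implicit $L$ theorem does not apply and the identification of $\bar{A}^{(n+1)}$ with $A^{(n+1)}$ may break down.
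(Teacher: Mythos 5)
Your proof is correct and takes essentially the same route as the paper: the paper gives no separate proof of Theorem \ref{thm:LR_transformation}, relying instead on the preceding paragraph, which establishes exactly the identification $\bar{A}^{(n+1)}=A^{(n+1)}$ and $L^{(n)}=I+\mu^{(n)}G^{(n)}$ via the implicit $L$ theorem applied to $L^{(n)}\bar{A}^{(n+1)}=A^{(n)}L^{(n)}$ and $(I+\mu^{(n)}G^{(n)})A^{(n+1)}=A^{(n)}(I+\mu^{(n)}G^{(n)})$ with matching $(2,1)$ entries. Your assembly of \eqref{eqn:implicit_LR} at step $n+1$, \eqref{eqn:implicit_RL}, and that identification is precisely the intended argument, and your caveat about the nonzero-subdiagonal hypothesis is apt.
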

%
%
\section{Asymptotic convergence to matrix eigenvalues}
\label{sec:4}
%
In this section, we consider the case where the initial matrix $A^{(0)}$ is a nonsingular TN matrix and show asymptotic convergence as $n\to\infty$ to matrix eigenvalues in the extended $q$-discrete Toda equation \eqref{eqn:extended_qdToda}. We first prove that time evolution of the extended $q$-discrete Toda equation never breaks down if the initial matrix is nonsingular TN and the TN property is retained throughout the evolution. We then use the convergence theorem of $LR$ transformations of a TN Hessenberg matrix. Finally, we present two examples to verify the convergence numerically. 
\par
Let us begin by preparing a lemma showing the positivity of entries of bidiagonal matrices in the tridiagonal $LU$ decomposition.
\begin{lemma}
\label{lem:TN_decomposition}
Let $L$ and $U$ be $m$-by-$m$ lower and upper bidiagonal matrices given by:
\[
L\coloneqq
\left( \begin{array}{cccc}
1 & & & \\
e_{1} & 1 & & \\
 & \ddots & \ddots & \\
 & & e_{m-1} & 1
\end{array} \right),\quad
U\coloneqq
\left( \begin{array}{cccc}
q_1 & f_{1} & & \\
 & q_2 & \ddots & \\
 & & \ddots & f_{m-1} \\
 & & &q_{m} 
\end{array} \right). 
\]
If $e_1>0,e_2>0,\ldots,e_{m-1}>0$, $q_1>0,q_2>0,\ldots,q_m>0$, and $f_1\geq0,f_2\geq0,\ldots,f_{m-1}\geq0$, then there exists an $LU$ decomposition of $UL$ such that
\[
\bar{L} \bar{U}=UL,
\]
where $\bar{L}$ and $\bar{U}$ have the same forms as $L$ and $U$, respectively. 
Moreover, it holds that $\bar{e}_1>0,\bar{e}_2>0,\ldots,\bar{e}_{m-1}>0$, $\bar{q}_1>0,\bar{q}_2>0,\ldots,\bar{q}_m>0$, and $\bar{f}_1=f_1,\bar{f}_2=f_2,\ldots,\bar{f}_{m-1}=f_{m-1}$.
\end{lemma}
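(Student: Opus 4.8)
The plan is to prove the statement by an explicit computation of the product $UL$ combined with the theory of $LU$ factorization through leading principal minors. First I would form $UL$ and observe that it is tridiagonal: a short calculation gives diagonal entries $(UL)_{i,i}=q_i+f_ie_i$ for $i=1,\dots,m-1$ and $(UL)_{m,m}=q_m$, subdiagonal entries $(UL)_{i+1,i}=q_{i+1}e_i$, and superdiagonal entries $(UL)_{i,i+1}=f_i$. The crucial structural observation is that for any product $\bar{L}\bar{U}$ of a unit lower bidiagonal matrix and an upper bidiagonal matrix, the superdiagonal of the product coincides with the superdiagonal of $\bar{U}$; comparing with the superdiagonal of $UL$ immediately forces $\bar{f}_i=f_i$, so the preservation of the $f$'s is essentially free and independent of the positivity claims.

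For existence of the decomposition and positivity of the $\bar{e}_i$ and $\bar{q}_i$, I would argue through the leading principal minors $\Delta_i\coloneqq\det((UL)_{1:i}^{1:i})$ with $\Delta_0\coloneqq1$. The standard three-term recurrence for a tridiagonal determinant gives
\[
\Delta_i=(q_i+f_ie_i)\Delta_{i-1}-f_{i-1}q_ie_{i-1}\Delta_{i-2}.
\]
The main step is to establish, by induction, the telescoped identity $\Delta_i=q_1q_2\cdots q_i+f_ie_i\Delta_{i-1}$ for $i=1,\dots,m-1$ (and $\Delta_m=q_1\cdots q_m$), which follows upon substituting $\Delta_{i-1}-f_{i-1}e_{i-1}\Delta_{i-2}=q_1\cdots q_{i-1}$ into the recurrence. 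Since $q_1,\dots,q_m>0$, $e_i>0$, and $f_i\ge0$, this identity shows inductively that $\Delta_i\ge q_1q_2\cdots q_i>0$ for every $i$.

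With all leading principal minors strictly positive, $UL$ admits a unique $LU$ decomposition $\bar{L}\bar{U}=UL$ with $\bar{L}$ unit lower triangular and $\bar{U}$ upper triangular; because $UL$ is tridiagonal, these factors are automatically bidiagonal, so they have exactly the prescribed forms of $L$ and $U$. The diagonal entries are then $\bar{q}_i=\Delta_i/\Delta_{i-1}>0$, and reading off the $(i+1,i)$ entry of $\bar{L}\bar{U}=UL$ yields $\bar{e}_i\bar{q}_i=q_{i+1}e_i$, whence $\bar{e}_i=q_{i+1}e_i/\bar{q}_i>0$. Together with $\bar{f}_i=f_i$ from the first paragraph, this gives all the claimed conclusions.

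I expect the main obstacle to be the sign issue: the tridiagonal determinant recurrence carries a subtraction, so positivity of $\Delta_i$ (equivalently of $\bar{q}_i$) is not immediate from a naive induction. The telescoping identity $\Delta_i=q_1\cdots q_i+f_ie_i\Delta_{i-1}$ is precisely what converts the indefinite-sign recurrence into a manifestly positive one and is the technical heart of the argument. An alternative but essentially equivalent route is to match entries of $\bar{L}\bar{U}=UL$ directly, obtaining the pivot recursion $\bar{q}_{i+1}=q_{i+1}+f_{i+1}e_{i+1}-q_{i+1}e_if_i/\bar{q}_i$, and then to strengthen the induction hypothesis enough to control the subtracted term $q_{i+1}e_if_i/\bar{q}_i$; I regard the minor-based formulation as the cleaner of the two.
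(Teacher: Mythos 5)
Your proof is correct, and it takes a different route from the paper's. The paper works entirely entrywise: it equates the $(i,i+1)$, $(i,i)$, and $(i-1,i)$ entries of $\bar{L}\bar{U}=UL$ to get $\bar{f}_i=f_i$, $\bar{q}_i+\bar{e}_{i-1}\bar{f}_{i-1}=q_i+e_if_i$, and $\bar{q}_i\bar{e}_i=q_{i+1}e_i$, and then defuses the subtraction by introducing the auxiliary quantities $d_i\coloneqq q_i-\bar{e}_{i-1}\bar{f}_{i-1}$, which satisfy the subtraction-free recursion $d_1=q_1$, $d_i=(q_i/\bar{q}_{i-1})d_{i-1}$, giving $\bar{q}_i=d_i+e_if_i>0$ and $\bar{e}_i=q_{i+1}e_i/\bar{q}_i>0$ by a one-line induction. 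Your minor-based argument is the determinantal incarnation of the same mechanism: your telescoped identity $\Delta_i=q_1\cdots q_i+f_ie_i\Delta_{i-1}$ is equivalent to the paper's recursion via $d_i=q_1\cdots q_i/\Delta_{i-1}$ and $\bar{q}_i=\Delta_i/\Delta_{i-1}$, so you have correctly identified and resolved the same sign obstruction, just one level up. What your version buys is a cleaner treatment of existence (positivity of all leading principal minors immediately gives existence and uniqueness of the $LU$ factorization, with bidiagonality of the factors inherited from the tridiagonal band structure), whereas the paper's derivation tacitly assumes the factorization and then observes that its recursion never divides by zero. What the paper's version buys is the explicit subtraction-free (dqd-style) recursion for the factor entries, which is the form relevant to the numerics elsewhere in the paper. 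Both are complete proofs of the lemma.
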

\begin{proof}
Observing the $(i,i+1),(i,i)$ and $(i-1,i)$ entries in the matrix equality $\bar{L}\bar{U}=UL$, we derive:
\begin{align}
\label{eqn:TN_decomposition_proof_1}
\left\{
\begin{aligned}
& \bar{f}_i = f_i,\quad i=1,2,\ldots,m-1,\\
& \bar{q}_i+\bar{e}_{i-1} \bar{f}_{i-1}=q_i+e_i f_i,\quad i=1,2,\ldots,m, \\
& \bar{q}_i \bar{e}_i =q_{i+1}e_i,\quad i=1,2,\ldots,m-1,
\end{aligned}
\right.
\end{align}
where $\bar{e}_0\coloneqq0,\bar{f}_0\coloneqq0$ and $\bar{e}_m\coloneqq0,\bar{f}_m\coloneqq0$. 
Thus, it holds that $\bar{f}_1=f_1,\bar{f}_2=f_2,\ldots,\bar{f}_{m-1}=f_{m-1}$. 
By introducing new variables:
\begin{equation}
\label{eqn:d_positivity}
d_i=q_i-\bar{e}_{i-1} \bar{f}_{i-1},\quad i=1,2,\ldots,m,
\end{equation}
and using the first and third equations of \eqref{eqn:TN_decomposition_proof_1}, we obtain:
\[
d_i=\dfrac{q_{i}}{\bar{q}_{i-1}}( \bar{q}_{i-1}-e_{i-1} f_{i-1} ),\quad i=1,2,\ldots,m,
\]
As $\bar{q}_{i-1}-e_{i-1}f_{i-1}=d_{i-1}$, we can rewrite \eqref{eqn:TN_decomposition_proof_1} as:
\begin{align*}
\left\{
\begin{aligned}
& \bar{f}_i = f_i,\quad i=1,2,\ldots,m-1,\\
& \bar{q}_i=d_i+e_i f_i,\quad i=1,2,\ldots,m, \\
& \bar{e}_i=\dfrac{q_{i+1}}{\bar{q}_i}e_i,\quad i=1,2,\ldots,m-1,\\
& d_1\coloneqq q_1,\quad d_{i}=\dfrac{q_{i}}{\bar{q}_{i-1}}d_{i-1},\quad i=2,3,\ldots,m.
\end{aligned}
\right.
\end{align*}
This suggests that the $LU$ decomposition of $UL$ is uniquely given, and the positivities of $q_i$ and $e_i$ dominate those of $\bar{q}_i$ and $\bar{e}_i$.
\end{proof}
%
\par
Using Lemma \ref{lem:TN_decomposition}, we show that the shifted $LR$ transformation from $A^{(n)}$ to $A^{(n+1)}$ does not break down if $A^{(n)}$ is nonsingular TN and that the TN property is inherited by $A^{(n+1)}$. 
\begin{proposition}
\label{prop:TN_properties}
Let us assume that $A^{(n)}$ is a nonsingular TN Hessenberg matrix with positive subdiagonals, namely, $y_{1}^{(n)}>0,y_{2}^{(n)}>0,\ldots,y_{m-1}^{(n)}>0$. If $1/\mu^{(n)}>-\lambda_{\min}(A^{(n)})$, then $A^{(n)}+(1/\mu^{(n)})I$ admits the $LU$ decomposition. The Hessenberg matrix $A^{(n+1)}$ obtained by the extended $q$-discrete Toda equation \eqref{eqn:extended_qdToda} is also a nonsingular TN matrix with positive subdiagonals. 
\end{proposition}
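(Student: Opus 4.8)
The plan is to split the statement into its two assertions — existence of the $LU$ decomposition of $A^{(n)}+(1/\mu^{(n)})I$, and the nonsingular TN property of the output $A^{(n+1)}$ — and to attack them through the shifted $LR$-transformation picture of Theorem~\ref{thm:LR_transformation} together with the positivity-propagation recurrence of Lemma~\ref{lem:TN_decomposition}. Throughout I would rely on two standard facts from the theory of TN matrices: every leading principal submatrix of a nonsingular TN matrix is again nonsingular TN, and the eigenvalues of a nonsingular TN matrix are real and strictly positive. Writing $\sigma\coloneqq1/\mu^{(n)}$, the first assertion is equivalent to showing that every leading principal minor of $A^{(n)}+\sigma I$ is nonzero.

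For that first step I would compute the $k$th leading principal minor as $\det\big((A^{(n)}+\sigma I)_{1:k}^{1:k}\big)=\det\big((A^{(n)})_{1:k}^{1:k}+\sigma I_k\big)=\prod_{i=1}^{k}\big(\sigma+\lambda_i^{(k)}\big)$, where $\lambda_1^{(k)},\dots,\lambda_k^{(k)}$ are the eigenvalues of the leading principal submatrix $(A^{(n)})_{1:k}^{1:k}$. Since this submatrix is itself nonsingular TN, those eigenvalues are positive, so each factor is positive as soon as $\sigma>-\lambda_{\min}\big((A^{(n)})_{1:k}^{1:k}\big)$. To bring the hypothesis $\sigma>-\lambda_{\min}(A^{(n)})$ to bear for \emph{every} $k$, I would invoke the interlacing of eigenvalues of a nonsingular TN matrix with those of its leading principal submatrices, giving $\lambda_{\min}\big((A^{(n)})_{1:k}^{1:k}\big)\ge\lambda_{\min}(A^{(n)})$ for all $k$. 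Hence all leading principal minors are positive, the factorization $A^{(n)}+\sigma I=L^{(n)}R^{(n)}$ of \eqref{eqn:implicit_LR} exists and is unique, the pivots (the diagonal of $R^{(n)}$) are positive, and — because the subdiagonal of $A^{(n)}+\sigma I$ is untouched by the shift and equals $y_i^{(n)}>0$ — the subdiagonal entries of $L^{(n)}$ are positive as well.

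For the second assertion I would pass to $\bar A^{(n+1)}+\sigma I=R^{(n)}L^{(n)}$ from \eqref{eqn:implicit_RL} and, exactly as in the discussion preceding Theorem~\ref{thm:LR_transformation} (using the implicit $L$ theorem, Theorem~\ref{thm:implicit_L_theorem}), identify $\bar A^{(n+1)}$ with $A^{(n+1)}$ and $L^{(n)}$ with $I+\mu^{(n)}G^{(n)}$. Since the first superdiagonal of $R^{(n)}$ coincides with that of $A^{(n)}$, it is nonnegative; for $M=1$ this already makes $R^{(n)}$ an upper bidiagonal TN matrix, so $A^{(n+1)}+\sigma I=R^{(n)}L^{(n)}$ is a product of TN matrices, and the refactorization is precisely the situation of Lemma~\ref{lem:TN_decomposition}, whose recurrence keeps the auxiliary quantities $d_i$ positive. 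For $M>1$ the higher bands of $R^{(n)}$ are no longer manifestly nonnegative, and one must prove a banded generalization of Lemma~\ref{lem:TN_decomposition}; this is the first genuinely technical point. Either way the target is to certify that $A^{(n+1)}$ admits an $LU$ decomposition into nonnegative bidiagonal factors, which — together with its positive subdiagonals and the prescribed unit entries on the $M$th superdiagonal — is equivalent to its nonsingular TN property.

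The main obstacle is the removal of the shift: subtracting $\sigma I$ from the TN matrix $A^{(n+1)}+\sigma I$ need not preserve total nonnegativity, so I cannot simply read off that $A^{(n+1)}$ itself is TN. I would therefore show directly that the leading principal minors of $A^{(n+1)}$ are positive, which is nothing but the non-breakdown condition for the \emph{unshifted} $LR$ step applied to $A^{(n+1)}$, and which is secured by the bound $1/\mu^{(n)}>-\lambda_{\min}(A^{(n)})$ through the same $d_i>0$ mechanism as in Lemma~\ref{lem:TN_decomposition}; since $A^{(n+1)}$ is similar to $A^{(n)}$ via $L^{(n)}$, its spectrum is the positive spectrum of $A^{(n)}$, which is what drives this positivity. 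Establishing this shifted, banded positivity recurrence, and the TN eigenvalue interlacing invoked in the first step, are the two places where real work is needed; the remaining bookkeeping is routine.
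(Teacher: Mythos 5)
Your first half matches the paper's argument essentially verbatim: the interlacing theorem for TN matrices gives $\lambda_{\min}\big((A^{(n)})_{1:i}^{1:i}\big)\ge\lambda_{\min}(A^{(n)})$, so all leading principal minors of $A^{(n)}+(1/\mu^{(n)})I$ are positive, the $LU$ decomposition exists, the pivots $r_{i,i}^{(n)}$ are positive, and $y_i^{(n)}=\ell_{i+1,i}^{(n)}r_{i,i}^{(n)}$ forces $\ell_{i+1,i}^{(n)}>0$ (and then $y_i^{(n+1)}=r_{i+1,i+1}^{(n)}\ell_{i+1,i}^{(n)}>0$). That part is fine.

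The second half has a genuine gap. You correctly diagnose the obstacle --- subtracting $\sigma I$ from the TN matrix $R^{(n)}L^{(n)}$ does not preserve total nonnegativity --- but your proposed repair does not close it. Showing that the leading principal minors of $A^{(n+1)}$ are positive only establishes that $A^{(n+1)}$ admits an $LU$ decomposition; total nonnegativity requires \emph{all} minors to be nonnegative, and positive leading principal minors are far from sufficient. You do cite the correct characterization (nonsingular TN $\Leftrightarrow$ factorization into nonnegative bidiagonal factors), but you never construct such a factorization of $A^{(n+1)}$: your route goes through $R^{(n)}$, which for $M>1$ is not known to be TN, and the ``banded generalization of Lemma~\ref{lem:TN_decomposition}'' you would need is left unproved and is aimed at the shifted matrix $R^{(n)}L^{(n)}$ rather than at $A^{(n+1)}$ itself. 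The paper avoids $R^{(n)}$ entirely: it takes the Gasca--Pe\~na bidiagonal decomposition of the \emph{unshifted} matrix, $A^{(n)}={\cal L}{\cal D}\,{\cal U}_1\cdots{\cal U}_N$ with each ${\cal U}_k$ a unit upper bidiagonal factor, and repeatedly applies Lemma~\ref{lem:TN_decomposition} to commute the positive lower bidiagonal factor $L^{(n)}$ leftward through the ${\cal U}_k$. This turns $A^{(n+1)}=(L^{(n)})^{-1}A^{(n)}L^{(n)}$ into an explicit product ${\cal L}^\ast{\cal D}^\ast{\cal U}^\ast$ of candidate TN factors; the only factor whose nonnegativity is not automatic, ${\cal L}^\ast=(L^{(n)})^{-1}{\cal L}L^{(n,N+1)}$, must be unit lower bidiagonal because $A^{(n+1)}$ is Hessenberg, and its subdiagonal entries are $y_i^{(n+1)}/d_i^\ast>0$ by the positivity you established in the first half. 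Without this (or an equivalent) construction of a nonnegative bidiagonal factorization of $A^{(n+1)}$, the TN claim is not proved.
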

\begin{proof}
If the leading principal minors of $A^{(n)}+(1/\mu^{(n)})I$ are all positive, then its $LU$ factorization is uniquely determined. 
Since $A^{(n)}$ is nonsingular TN, all eigenvalues of $A^{(n)}$ are real and positive. Combining this fact with the interlacing theorem \cite{Li_2002}, we obtain:
\[
\lambda_{\min}( (A^{(n)})_{1:i}^{1:i} ) \geq \lambda_{\min}( A^{(n)} )>0.
\]
If $1/\mu^{(n)}>-\lambda_{\min}( A^{(n)} )$, then $\lambda_k((A^{(n)})_{1:i}^{1:i})+1/\mu^{(n)}>0$ for $k=1,2,\ldots,i$. 
Thus, it follows that:
\begin{align}
\label{eqn:det_positivity}
\begin{aligned}
\det\left( \left( A^{(n)}+\dfrac{1}{\mu^{(n)}} I \right)_{1:i}^{1:i} \right)
&=\prod_{k=1}^i \lambda_k\left(  \left( A^{(n)}+\dfrac{1}{\mu^{(n)}}I \right)_{1:i}^{1:i} \right) \\
&=\prod_{k=1}^i \left( \lambda_k((A^{(n)})_{1:i}^{1:i})+\dfrac{1}{\mu^{(n)}} \right) \\
&> 0.
\end{aligned}
\end{align}
We therefore see that $A^{(n)}+(1/\mu^{(n)})I$ admits the $LU$ decomposition. 
According to \cite{Pinkus_2009}, the $(i,i)$ entry of $R^{(n)}$ is given as:
\[
r_{i,i}^{(n)} = \dfrac{\det\left( \left( A^{(n)}+(1/\mu^{(n)})I \right)_{1:i}^{1:i} \right) }{\det\left( \left( A^{(n)}+(1/\mu^{(n)})I \right)_{1:i-1}^{1:i-1} \right) },\quad i=1,2,\ldots,m,
\]
where $\det\left( \left( A^{(n)}+(1/\mu^{(n)})I \right)_{1:0}^{1:0} \right)\coloneqq1$. Combining this with \eqref{eqn:det_positivity}, we derive $r_{i,i}^{(n)}>0$. Observing the $(i+1,i)$ entries on the $LU$ factorization $A^{(n)}+(1/\mu^{(n)})I=L^{(n)}R^{(n)}$, we obtain:
\[
y_{i}^{(n)}=\ell_{i+1,i}^{(n)}r_{i,i}^{(n)},\quad i=1,2,\ldots,m-1.
\]
This result combined with $r_{i,i}^{(n)}>0$ and $y_{i}^{(n)}>0$ leads to $\ell_{i,i+1}^{(n)}>0$. Thus, $L^{(n)}$ is a positive lower bidiagonal matrix.
Focusing on the equalities of the $(i+1,i)$ entries of $A^{(n+1)}=R^{(n)}L^{(n)}-(1/\mu^{(n)})I$, namely,
\[
y_{i}^{(n+1)}=r_{i+1,i+1}^{(n)}\ell_{i+1,i}^{(n)},\quad i=1,2,\ldots,m-1,
\]
we obtain $y_{i}^{(n+1)}>0$. 
\par
It remains to be shown that $A^{(n+1)}$ is TN. Since $A^{(n)}$ is nonsingular TN, it admits the $LDU$ decomposition \cite[Theorem 4.1]{Gaska_2010}:
\begin{equation}
A^{(n)}={\cal L} {\cal D} {\cal U}, \label{eq:LDU}
\end{equation}
where ${\cal L}$ is a product of unit lower bidiagonal matrices of the form $I+c{\bm e}_{i+1}{\bm e}_i^{\top}$ ($c>0$), where ${\bm e}_i$ is the $i$th column of the identity matrix of order $m$, ${\cal D}$ is a diagonal matrix with positive diagonals, and ${\cal U}$ is a product of unit upper bidiagonal matrices of the form $I+c{\bm e}_i{\bm e}_{i+1}^{\top}$ ($c>0$). Note that ${\cal L}$ is actually a unit lower bidiagonal matrix with positive subdiagonal entries because it is also characterized as the lower triangular factor of the $LU$ decomposition of $A^{(n)}$. Furthermore, we can write ${\cal U}$ as:
\[
{\cal U}={\cal U}_1 {\cal U}_2 \cdots {\cal U}_N,
\]
where $N$ is the number of factors of ${\cal U}$ and each of ${\cal U}_1,{\cal U}_2,\dots,{\cal U}_N$ is a unit upper bidiagonal matrix with only one positive subdiagonal entries.
Thus, by repeatedly applying Lemma \ref{lem:TN_decomposition}, we derive:
\begin{align}
{\cal U}_1 \cdots {\cal U}_{N-1} {\cal U}_N L^{(n)}
&= {\cal U}_1 \cdots {\cal U}_{N-1} L^{(n,1)} \bar{{\cal U}}_N \nonumber \\
&= {\cal U}_1 \cdots L^{(n,2)} \bar{{\cal U}}_{N-1} \bar{{\cal U}}_N \nonumber \\
&\,\,\,\vdots \nonumber \\
&=  L^{(n,N)} \bar{{\cal U}}_1 \bar{{\cal U}}_2 \cdots \bar{{\cal U}}_N,
\label{eqn:unit_decomposition}
\end{align}
where $L^{(n,1)},L^{(n,2)},\ldots,L^{(n,N)}$ are unit bidiagonal matrices with positive subdiagonals, and $\bar{\cal U}_i$ has the same structure as ${\cal U}_i$. Then, by introducing auxiliary matrices:
\begin{align*}
\left\{
\begin{aligned}
& D_{N+1} \coloneqq I,\quad D_i\coloneqq (\bar{{\cal U}}_iD_{i+1})_{\rm diag}, \quad i=N,N-1,\ldots,1,\\
& {\cal U}_i^\ast \coloneqq D_i^{-1} \bar{{\cal U}}_i D_{i+1},\quad i=N,N-1,\ldots,1,
\end{aligned}
\right.
\end{align*}
where $(\cdot)_{\rm diag}$ denotes the diagonal matrix consisting of the diagonal entries of a matrix, and noting that $L^{(n,N+1)}\coloneqq{\cal D} L^{(n,N)} {\cal D}^{-1}$, we can rewrite \eqref{eqn:unit_decomposition} as:
\[
{\cal U}_1 {\cal U}_2 \cdots {\cal U}_N L^{(n)}
= {\cal D}^{-1} L^{(n,N+1)} {\cal D} D_1 {\cal U}_1^\ast {\cal U}_2^\ast \cdots {\cal U}_N^\ast.
\]
Since $A^{(n+1)}=(L^{(n)})^{-1}A^{(n)}L^{(n)}$ with $A^{(n)}={\cal L} {\cal D} {\cal U}_1 {\cal U}_2 \cdots {\cal U}_N$, it follows that:
\[
A^{(n+1)} = (L^{(n)})^{-1} {\cal L} L^{(n,N+1)} {\cal D} D_1 {\cal U}_1^\ast {\cal U}_2^\ast \cdots {\cal U}_N^\ast.
\]
Putting ${\cal L}^\ast\coloneqq(L^{(n)})^{-1} {\cal L} L^{(n,N+1)}$, ${\cal D}^\ast\coloneqq{\cal D} D_1$, and ${\cal U}^{\ast}\coloneqq{\cal U}_1^{\ast}{\cal U}_2^{\ast}\cdots {\cal U}_N^{\ast}$, we obtain the $LDU$ decomposition of $A^{(n+1)}$ as $A^{(n+1)}={\cal L}^{\ast}{\cal D}^{\ast}{\cal U}^{\ast}$. 
Since $A^{(n+1)}$ is a Hessenberg matrix, ${\cal L}^\ast$ is unit lower bidiagonal. By focusing on the $(i+1,i)$ entry of $A^{(n+1)}={\cal L}^\ast {\cal D}^\ast {\cal U}^{\ast}$, we have:
\begin{equation}
\label{eqn:positivity_cal_L}
y_{i}^{(n+1)}=\ell_{i+1,i}^\ast d_{i,i}^\ast,
\end{equation}
where $\ell^\ast_{i+1,i}$ denotes the $(i+1,i)$ entry of ${\cal L}^\ast$, and $d_i^\ast$ denotes the $i$th diagonal entry of ${\cal D}^\ast$. Equation \eqref{eqn:positivity_cal_L} with $y_{i}^{(n+1)}>0$ and $d_i^\ast>0$ leads to the positivity $\ell_{i+1,i}^{\ast}>0$, which implies that ${\cal L}^\ast$ is TN. Since ${\cal D}^\ast$ and ${\cal U}^\ast$ are products of nonnegative (bi)diagonal matrices and are therefore TN, we conclude that $A^{(n+1)}={\cal L}^{\ast}{\cal D}^{\ast}{\cal U}^{\ast}$ is also TN.
\end{proof}
\noindent
Using Proposition \ref{prop:TN_properties} repeatedly, we obtain a theorem concerning the TN property of the matrix sequence $\{A^{(n)}\}_{n=0,1,\dots}$.
\begin{theorem}
\label{thm:TN_properties}
Let us assume that $A^{(0)}$ is nonsingular TN matrix with positive subdiagonals. If $1/\mu^{(0)}>-\lambda_{\min}( A^{(0)} ),1/\mu^{(1)}>-\lambda_{\min}( A^{(0)} ),\ldots,$ then $A^{(n)}+(1/\mu^{(n)})I$ allows the $LR$ decomposition for any $n\ge 0$ and time evolution of the extended $q$-Toda equation \eqref{eqn:extended_qdToda} can be carried out without breakdown. Moreover, the Hessenberg matrix $A^{(n)}$ is nonsingular TN with positive subdiagonals.
\end{theorem}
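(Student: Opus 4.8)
The plan is to argue by induction on $n$, with Proposition~\ref{prop:TN_properties} supplying the single-step advance and the similarity structure of the transformation supplying the one extra ingredient needed to feed the induction. For the base case $n=0$ there is nothing to prove: $A^{(0)}$ is assumed to be nonsingular TN with positive subdiagonals, and the standing hypothesis already asserts $1/\mu^{(0)} > -\lambda_{\min}(A^{(0)})$.

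The key step, and the only place where care is genuinely required, is reconciling the shift condition. Proposition~\ref{prop:TN_properties} requires the bound $1/\mu^{(n)} > -\lambda_{\min}(A^{(n)})$ at stage $n$, whereas the theorem is stated in terms of the fixed quantity $\lambda_{\min}(A^{(0)})$. To bridge this gap I would observe that every step of the extended $q$-discrete Toda equation is a similarity transformation: by Theorem~\ref{thm:similarity_transformation} together with $\det(I + \mu^{(n)}G^{(n)}) = 1$, we have $A^{(n+1)} = (I + \mu^{(n)}G^{(n)})^{-1} A^{(n)} (I + \mu^{(n)}G^{(n)})$. Consequently $A^{(n)}$ and $A^{(0)}$ are similar, so they share the same spectrum and in particular $\lambda_{\min}(A^{(n)}) = \lambda_{\min}(A^{(0)})$ for every $n$. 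Thus the hypothesis $1/\mu^{(n)} > -\lambda_{\min}(A^{(0)})$ is precisely the condition $1/\mu^{(n)} > -\lambda_{\min}(A^{(n)})$ demanded by the proposition.

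With this identification in hand the induction closes immediately. Assuming $A^{(n)}$ is nonsingular TN with positive subdiagonals, Proposition~\ref{prop:TN_properties} guarantees that $A^{(n)} + (1/\mu^{(n)})I$ admits the $LR$ decomposition and that $A^{(n+1)}$ is again nonsingular TN with positive subdiagonals, completing the inductive step. I would round this off by remarking that the positivity of the subdiagonals $y_i^{(n+1)} > 0$, together with the positivity of the diagonal entries $r_{i,i}^{(n)}$ of $R^{(n)}$ established inside the proposition, ensures that the denominators appearing in the auxiliary variables $g_i^{(n)}$ of~\eqref{eqn:extended_qdToda_g} never vanish; this is exactly what it means concretely for the time evolution to proceed \emph{without breakdown}.

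I do not anticipate a deep obstacle: once the spectral invariance of the similarity transformation is noted, the argument reduces to a routine bookkeeping induction, and the genuine analytic content sits in Proposition~\ref{prop:TN_properties} and Lemma~\ref{lem:TN_decomposition}, which are already in place. The one point I would keep in mind while writing is to make the spectral-invariance observation explicit before invoking the proposition, since without it the per-stage hypothesis of Proposition~\ref{prop:TN_properties} does not obviously follow from the uniform hypothesis of the theorem.
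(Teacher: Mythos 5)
Your proposal is correct and follows the same route as the paper, which obtains the theorem simply by applying Proposition~\ref{prop:TN_properties} repeatedly. Your explicit observation that the similarity $A^{(n+1)}=(I+\mu^{(n)}G^{(n)})^{-1}A^{(n)}(I+\mu^{(n)}G^{(n)})$ forces $\lambda_{\min}(A^{(n)})=\lambda_{\min}(A^{(0)})$, so that the uniform shift hypothesis implies the per-step hypothesis of the proposition, is exactly the point the paper leaves implicit, and your induction is the intended argument.
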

Note that while we used the bidiagonal decomposition \eqref{eq:LDU} of $A^{(n)}$ in the proof of Proposition \ref{prop:TN_properties}, it is only for theoretical purposes and not computed in practice. In the actual time evolution, the recursion formula \eqref{eqn:extended_qdToda} is used.
%
\par
According to Fallat et al.~\cite{Fallat_2005}, a nonsingular and irreducible TN matrix has distinct positive eigenvalues. Our $A^{(0)}$ falls in this category because a nonsingular TN Hessenberg matrix with positive subdiagonals is irreducible. Hereinafter, let $\lambda_1,\lambda_2,\dots,\lambda_m$ denote eigenvalues of $A^{(0)}$ such that $\lambda_1>\lambda_2>\cdots >\lambda_m>0$. Furthermore, $A^{(0)}$ is diagonalizable since it has $m$ distinct eigenvalues. Fukuda et al.~\cite{Fukuda_2013_2} presents a convergence theorem for the shifted $LR$ transformations of TN Hessenberg matrices. While they dealt with a TN Hessenberg matrix expressed as a product of positive bidiagonal matrices, a closer examination reveals that the convergence theorem given by them is applicable to a general nonsingular TN Hessenberg matrix with positive subdiagonal entries (and therefore with distinct positive eigenvalues). In the notation of our paper, the following lemma and theorem are proved in \cite{Fukuda_2013_2}.
\begin{lemma}(cf.~\cite[Lemma 1]{Fukuda_2013_2})
\label{lemma:Fukuda_lemma1}
Let $A^{(0)}$ be a nonsingular TN Hessenberg matrix with positive subdiagonal entries and denote its eigendecomposition by $A^{(0)}=PDP^{-1}$. Then, both $P$ and $P^{-1}$ admit the $LU$ decomposition.
\end{lemma}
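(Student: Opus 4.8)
The plan is to translate the statement into a condition on minors: a square matrix admits an $LU$ decomposition precisely when all of its leading principal minors are nonzero, so it suffices to prove that $\det(P_{1:k}^{1:k})\neq 0$ and $\det((P^{-1})_{1:k}^{1:k})\neq 0$ for every $k=1,2,\ldots,m$. The columns of $P$ are the eigenvectors $\bm{v}_1,\ldots,\bm{v}_m$ associated with the eigenvalues $\lambda_1>\lambda_2>\cdots>\lambda_m>0$, so $\det(P_{1:k}^{1:k})$ is the $k\times k$ minor of $P$ formed from its first $k$ rows and first $k$ columns. Since $A^{(0)}$ is nonsingular, irreducible, and TN, it is an oscillation matrix: there is an integer $N$ for which $(A^{(0)})^{N}$ is totally positive. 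I would use this fact as the engine of the whole argument.

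First I would pass to compound matrices. For $1\le p\le m$, the $p$-th compound $C_p(A^{(0)})$ is a nonnegative matrix whose entries are the $p\times p$ minors of $A^{(0)}$ and whose eigenvalues are the products $\lambda_{i_1}\cdots\lambda_{i_p}$ over all $p$-subsets $\{i_1<\cdots<i_p\}$. Because $C_p((A^{(0)})^{N})=C_p(A^{(0)})^{N}$ has all entries positive (the $p$-minors of a totally positive matrix are positive), $C_p(A^{(0)})$ is primitive, so Perron--Frobenius applies. Its unique dominant eigenvalue is the largest product $\lambda_1\lambda_2\cdots\lambda_p$, which is simple since it strictly exceeds every other product when the $\lambda_i$ are distinct and positive, and its eigenvector may be taken entrywise positive. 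On the other hand, the eigenvector of $C_p(A^{(0)})$ for $\lambda_1\cdots\lambda_p$ is the decomposable vector $\bm{v}_1\wedge\cdots\wedge\bm{v}_p$, whose coordinate indexed by a row set $\alpha=\{a_1<\cdots<a_p\}$ equals the minor $\det(P[\alpha,\{1,\ldots,p\}])$. Strict positivity of this eigenvector therefore forces every such minor---in particular the leading principal minor $\det(P_{1:p}^{1:p})$---to be nonzero. Letting $p$ range over $1,\ldots,m$ shows that $P$ admits the $LU$ decomposition.

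For $P^{-1}$ I would reduce to the same statement applied to $A^{(0)\top}$. Since $A^{(0)\top}$ is again nonsingular, irreducible, and TN (these properties are transpose-invariant), it is an oscillation matrix with the same decreasing eigenvalues, and its eigenvector matrix $Q$---whose columns are the left eigenvectors of $A^{(0)}$---has all leading principal minors nonzero by the argument above. The rows of $P^{-1}$ are exactly these left eigenvectors, so $Q=(P^{-1})^{\top}C$ for a nonsingular diagonal matrix $C$ absorbing the freedom in normalizing eigenvectors. Hence $\det(Q_{1:k}^{1:k})=\det((P^{-1})_{1:k}^{1:k})\prod_{i=1}^{k}c_{i,i}$ with $\prod_{i=1}^{k}c_{i,i}\neq 0$, and nonvanishing of the left-hand side yields $\det((P^{-1})_{1:k}^{1:k})\neq 0$ for all $k$, so $P^{-1}$ also admits the $LU$ decomposition.

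The routine parts are the linear-algebra bookkeeping: the compound-matrix eigenvalue/eigenvector correspondence, the Pl\"ucker interpretation of $\bm{v}_1\wedge\cdots\wedge\bm{v}_p$, and the transpose reduction for $P^{-1}$. The main obstacle is the positivity input: I must be sure that nonsingular irreducible TN forces an oscillation matrix, so that $C_p(A^{(0)})$ is genuinely primitive, and that Perron--Frobenius yields a strictly positive (not merely nonnegative) dominant eigenvector---it is precisely this strict positivity that upgrades ``the minors do not all vanish'' to ``each leading principal minor is nonzero.'' I would secure this step by invoking the oscillation-matrix characterization (Gantmacher--Krein / Fallat--Johnson) already used implicitly in the paper via \cite{Fallat_2005}.
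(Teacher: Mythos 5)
Your argument is essentially correct, but note that the paper does not actually prove this lemma: it is imported wholesale from \cite[Lemma 1]{Fukuda_2013_2}, with only the remark that the hypothesis there (a product of positive bidiagonal factors) can be relaxed to a general nonsingular TN Hessenberg matrix with positive subdiagonals. What you have written is a self-contained proof via the classical Gantmacher--Krein machinery: nonsingular $+$ irreducible $+$ TN gives an oscillatory matrix, the $p$-th compound $C_p(A^{(0)})$ is then primitive and nonnegative, Perron--Frobenius makes $\lambda_1\cdots\lambda_p$ a simple eigenvalue with strictly positive eigenvector, and identifying that eigenvector with $\bm{v}_1\wedge\cdots\wedge\bm{v}_p$ forces every minor $\det\bigl(P[\alpha,\{1,\dots,p\}]\bigr)$, in particular the leading principal one, to be nonzero; the transpose reduction for $P^{-1}$ is also sound since TN, nonsingularity, and irreducibility are all transpose-invariant and $\det\bigl(((P^{-1})^{\top})_{1:k}^{1:k}\bigr)=\det\bigl((P^{-1})_{1:k}^{1:k}\bigr)$. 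This is very likely the argument underlying the cited lemma anyway, and what your version buys is independence from the bidiagonal-product representation used in \cite{Fukuda_2013_2} --- which is precisely the generalization the present paper needs but only asserts.

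One point deserves explicit care. Irreducibility is \emph{not} a consequence of ``nonsingular TN Hessenberg with positive subdiagonals'' alone: a unit lower bidiagonal matrix satisfies all of these and is reducible (and not even diagonalizable). Irreducibility here comes from the specific banded form \eqref{eq:An}, whose unit entries in positions $(i,i+M)$ make the associated digraph strongly connected; the paper glosses over this in the sentence preceding the lemma, and your proof inherits the gloss. Since your entire positivity engine (oscillatory $\Rightarrow$ some power is totally positive $\Rightarrow$ $C_p(A^{(0)})$ primitive) rests on irreducibility, you should state explicitly that you are using the structure of \eqref{eq:An}, and cite the characterization that an invertible irreducible TN matrix is oscillatory (Gantmacher--Krein, or Fallat--Johnson), as you propose. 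With that made explicit, the proof is complete.
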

\begin{theorem}(cf. \cite[Theorem 4]{Fukuda_2013_2})
\label{thm:convergence_LR_transformation}
Let $A^{(0)}$ be a matrix with distinct positive eigenvalues $\lambda_1>\lambda_2>\cdots >\lambda_m>0$ and assume that the matrices $P$ and $P^{-1}$ in its eigendecomposition $A^{(0)}=PDP^{-1}$ allow the $LU$ decomposition. Moreover, assume that the sequence of the shifted $LR$ transformations:
\begin{equation}
A^{(n)}+\frac{1}{\mu^{(n)}}I = L^{(n)}R^{(n)}, \quad A^{(n+1)}=R^{(n)}L^{(n)}-\frac{1}{\mu^{(n)}}I,
\end{equation}
where $\mu^{(n)}$ is chosen to satisfy $1/\mu^{(n)}>-\lambda_m$, does not break down. Then, the matrix sequence $\{A^{(n)}\}_{n=0,1,\dots}$ converges to an upper triangular matrix with $\lambda_1,\lambda_2,\ldots,\lambda_m$ on the diagonal as $n\to\infty$. 
\end{theorem}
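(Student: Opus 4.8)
The plan is to follow the classical analysis of the shifted $LR$ algorithm, reducing everything to the growth of powers of the diagonal eigenvalue matrix. Write the shift as $\sigma^{(k)} := 1/\mu^{(k)}$, so the hypothesis $1/\mu^{(n)} > -\lambda_m$ becomes $\lambda_i + \sigma^{(k)} \geq \lambda_m + \sigma^{(k)} > 0$ for every $i$ and $k$; thus each shifted matrix $A^{(0)} + \sigma^{(k)} I$ has the positive eigenvalues $\lambda_i + \sigma^{(k)}$. First I would record the elementary consequence of one step: combining $A^{(n+1)} = R^{(n)} L^{(n)} - \sigma^{(n)} I$ with $A^{(n)} + \sigma^{(n)} I = L^{(n)} R^{(n)}$ gives the similarity $A^{(n+1)} = (L^{(n)})^{-1} A^{(n)} L^{(n)}$, hence by induction $A^{(n)} = L_{(n)}^{-1} A^{(0)} L_{(n)}$ with $L_{(n)} := L^{(0)} L^{(1)} \cdots L^{(n-1)}$ unit lower triangular.

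Next I would establish the telescoping product identity. Since the matrices $A^{(0)} + \sigma^{(k)} I$ are polynomials in $A^{(0)}$ they commute, and one proves by induction that
\[
\prod_{k=0}^{n-1}\bigl(A^{(0)}+\sigma^{(k)}I\bigr) = L_{(n)} R_{(n)}, \qquad R_{(n)} := R^{(n-1)} R^{(n-2)} \cdots R^{(0)},
\]
the inductive step using $(A^{(0)}+\sigma^{(n)}I)L_{(n)} = L_{(n)}(A^{(n)}+\sigma^{(n)}I) = L_{(n)} L^{(n)} R^{(n)}$. The non-breakdown hypothesis guarantees every factor exists, so the right-hand side is a genuine $LU$ factorization of a nonsingular matrix. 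Inserting the eigendecomposition $A^{(0)} = PDP^{-1}$ diagonalises the product as $P \Delta_n P^{-1}$ with $\Delta_n = \mathrm{diag}(\delta_1^{(n)}, \ldots, \delta_m^{(n)})$ and $\delta_i^{(n)} := \prod_{k=0}^{n-1}(\lambda_i + \sigma^{(k)}) > 0$.

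Then I would invoke Lemma \ref{lemma:Fukuda_lemma1}: writing $P = L_P U_P$ and $P^{-1} = \bar{L}\,\bar{U}$ with $L_P, \bar{L}$ unit lower triangular and $U_P, \bar{U}$ upper triangular, I factor
\[
P \Delta_n P^{-1} = L_P U_P\,(\Delta_n \bar{L}\,\Delta_n^{-1})\,\Delta_n \bar{U},\qquad \Delta_n \bar{L}\,\Delta_n^{-1} = I + E_n,
\]
where $E_n$ is strictly lower triangular with entries $(\delta_i^{(n)}/\delta_j^{(n)})\bar{\ell}_{ij}$ for $i>j$. Granting that $E_n \to 0$ (the analytic heart of the argument, discussed below), the matrix $I + U_P E_n U_P^{-1} \to I$ admits for large $n$ a unique $LU$ factorization $\hat{L}_n \hat{U}_n$ with $\hat{L}_n \to I$ and $\hat{U}_n \to I$, so that
\[
\prod_{k=0}^{n-1}\bigl(A^{(0)}+\sigma^{(k)}I\bigr) = (L_P \hat{L}_n)\,\bigl(\hat{U}_n U_P \Delta_n \bar{U}\bigr).
\]
Comparing with $L_{(n)} R_{(n)}$ and using uniqueness of the $LU$ factorization with unit lower factor gives $L_{(n)} = L_P \hat{L}_n \to L_P$. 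Feeding this into the accumulated similarity yields $A^{(n)} = L_{(n)}^{-1} A^{(0)} L_{(n)} \to L_P^{-1} A^{(0)} L_P = U_P D U_P^{-1}$, which is upper triangular, and since $U_P$ is upper triangular its diagonal is exactly $\lambda_1, \lambda_2, \ldots, \lambda_m$.

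The hard part will be the single step $E_n \to 0$, i.e.\ $\delta_i^{(n)}/\delta_j^{(n)} = \prod_k (\lambda_i+\sigma^{(k)})/(\lambda_j+\sigma^{(k)}) \to 0$ for $i>j$. Because $\sigma^{(k)}$ varies with $k$, this is not the plain geometric decay of the constant-shift case; one must control $\sum_k \log\bigl(1+(\lambda_i-\lambda_j)/(\lambda_j+\sigma^{(k)})\bigr)$, whose terms are asymptotically $(\lambda_i-\lambda_j)/(\lambda_j+\sigma^{(k)})$ and negative for $i>j$. The sum diverges to $-\infty$, forcing the product to $0$, exactly when $\sum_k 1/(\lambda_j+\sigma^{(k)}) = \infty$, that is, when the shifts do not grow too rapidly. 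Pinning down the precise hypothesis on $\{\mu^{(k)}\}$ securing this divergence, and verifying its compatibility with the non-breakdown condition $1/\mu^{(k)} > -\lambda_m$, is where the real work lies; the remaining steps are standard uniqueness-of-$LU$ bookkeeping.
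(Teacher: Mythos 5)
The paper does not actually prove Theorem~\ref{thm:convergence_LR_transformation}: it is imported verbatim from \cite[Theorem 4]{Fukuda_2013_2}, with only the remark that the argument there extends from bidiagonally factored TN Hessenberg matrices to general ones with distinct positive eigenvalues. So there is no in-paper proof to compare against, but your skeleton is the classical shifted-$LR$ convergence argument (accumulated similarity $A^{(n)}=L_{(n)}^{-1}A^{(0)}L_{(n)}$, the telescoping factorization $\prod_{k=0}^{n-1}(A^{(0)}+\sigma^{(k)}I)=L_{(n)}R_{(n)}$, insertion of the eigendecomposition, and the $LU$-uniqueness bookkeeping via Lemma~\ref{lemma:Fukuda_lemma1}), and that is precisely the route the cited reference takes. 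All of the algebraic steps you write down are correct, including the identification of the limit $U_PDU_P^{-1}$ and its diagonal.

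The genuine gap is the one you flag yourself and then leave open: the decay $\delta_i^{(n)}/\delta_j^{(n)}\to 0$ for $i>j$ is not a consequence of the stated hypothesis $1/\mu^{(n)}>-\lambda_m$, and without it the entire conclusion fails, so the proof is not complete as written. Concretely, take $\sigma^{(k)}=1/\mu^{(k)}=2^{k}$: the hypothesis holds, every shifted matrix is positive definite in spectrum, yet $\sum_k 1/(\lambda_j+\sigma^{(k)})<\infty$, so each ratio $\prod_k(\lambda_i+\sigma^{(k)})/(\lambda_j+\sigma^{(k)})$ converges to a \emph{positive} limit, $E_n\not\to 0$, $L_{(n)}$ converges to some unit lower triangular matrix different from $L_P$, and $A^{(n)}$ converges to a matrix that is generically not upper triangular. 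A correct proof must either add a condition ensuring $\sum_k 1/(\lambda_m+\sigma^{(k)})=\infty$ (e.g.\ $\sup_n 1/\mu^{(n)}<\infty$, which holds for the constant shifts $\mu^{(n)}=1$ used in the paper's experiments and is implicit in the setting of \cite{Fukuda_2013_2}) or restrict the class of admissible shift sequences; as it stands your argument proves the theorem only under such an unstated extra hypothesis. Identifying this is valuable --- it is a real restriction on the theorem as quoted --- but the proposal cannot be accepted as a proof of the statement as given until that step is either closed or the hypothesis is amended.
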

\noindent
By combining Theorem \ref{thm:TN_properties}, Lemma \ref{lemma:Fukuda_lemma1}, and Theorem \ref{thm:convergence_LR_transformation}, and noting that $L^{(n)}=I+\mu^{(n)}G^{(n)}$ tends to the identity matrix as $n\to\infty$, we have the following convergence theorem for the extended $q$-discrete Toda equation \eqref{eqn:extended_qdToda}.
\begin{theorem}
\label{thm:convergence_extended_qToda}
Let us assume that the initial setting of the extended $q$-discrete Toda equation \eqref{eqn:extended_qdToda} is given from entries of the Hessenberg matrix $A^{(0)}$ with positive subdiagonals. Additionally, let $1/\mu^{(0)}>-\lambda_m,1/\mu^{(1)}>-\lambda_m,\ldots.$ Then, it holds that 
\begin{align*}
\begin{aligned}
& \lim_{n \to \infty} x_{i,i}^{(n)} = \lambda_{i}, \quad i=1,2,\ldots,m, \\
& \lim_{n \to \infty} x_{i,j}^{(n)} = c_{i,j}, \quad i=1,2,\ldots,m-1, \quad j=i+1,i+2,\ldots,\ell_{i}, \\
& \lim_{n \to \infty} y_{i}^{(n)} = 0, \quad i=1,2,\ldots,m-1, \\
& \lim_{n \to \infty} g_{i}^{(n)} = 0, \quad i=1,2,\ldots,m-1, 
\end{aligned}
\end{align*}
where $c_{i,j}$ are some constants. 
\end{theorem}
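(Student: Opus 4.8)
The plan is to treat the statement as a synthesis of the results established above: verify that every hypothesis of the cited convergence theorem holds in our TN setting, apply it to get convergence of the matrix sequence, and then read off the four entrywise limits. First I would record the structural facts. As recalled before the statement, a nonsingular TN Hessenberg matrix $A^{(0)}$ with positive subdiagonals is irreducible and therefore has distinct positive eigenvalues $\lambda_1>\lambda_2>\cdots>\lambda_m>0$ and is diagonalizable, say $A^{(0)}=PDP^{-1}$. By the discussion of Section~\ref{sec:3} and Theorem~\ref{thm:LR_transformation}, the extended $q$-discrete Toda equation~\eqref{eqn:extended_qdToda} realizes exactly the shifted $LR$ transformation $A^{(n)}+(1/\mu^{(n)})I=L^{(n)}R^{(n)}$, $A^{(n+1)}=R^{(n)}L^{(n)}-(1/\mu^{(n)})I$. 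This is a similarity transformation and hence preserves the spectrum, so $\lambda_{\min}(A^{(n)})=\lambda_m$ for every $n$. Consequently the assumption $1/\mu^{(n)}>-\lambda_m$ coincides with the non-breakdown condition of Proposition~\ref{prop:TN_properties}, and Theorem~\ref{thm:TN_properties} guarantees that each $A^{(n)}$ stays nonsingular TN with positive subdiagonals and that no step breaks down.

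With the iteration well defined, I would invoke the convergence machinery. Lemma~\ref{lemma:Fukuda_lemma1} supplies the $LU$ decomposability of $P$ and $P^{-1}$, so every hypothesis of Theorem~\ref{thm:convergence_LR_transformation} is in force: distinct positive eigenvalues, $LU$ decomposability of $P$ and $P^{-1}$, the shift bound $1/\mu^{(n)}>-\lambda_m$, and absence of breakdown. That theorem then yields $A^{(n)}\to R_\infty$ as $n\to\infty$, where $R_\infty$ is upper triangular with $\lambda_1,\lambda_2,\ldots,\lambda_m$ on its diagonal. Comparing this limit with the form~\eqref{eq:An} of $A^{(n)}$ gives immediately $x_{i,i}^{(n)}\to\lambda_i$ for $i=1,\ldots,m$, the vanishing of the subdiagonal $y_i^{(n)}\to0$ for $i=1,\ldots,m-1$, and $x_{i,j}^{(n)}\to c_{i,j}$ for $i=1,\ldots,m-1$ and $j=i+1,\ldots,\ell_i$, where each $c_{i,j}$ is the corresponding entry of $R_\infty$.

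It remains to handle the auxiliary variables $g_i^{(n)}$. Here I would use $L^{(n)}=I+\mu^{(n)}G^{(n)}$, so comparing the $(i+1,i)$ entries of $L^{(n)}R^{(n)}=A^{(n)}+(1/\mu^{(n)})I$ gives $y_i^{(n)}=\mu^{(n)}g_i^{(n)}r_{i,i}^{(n)}$, that is, $g_i^{(n)}=y_i^{(n)}/(\mu^{(n)}r_{i,i}^{(n)})$. Using the pivot formula from the proof of Proposition~\ref{prop:TN_properties} together with the convergence of the leading principal submatrices, the denominator $\mu^{(n)}r_{i,i}^{(n)}$ approaches $1+\mu^{(n)}\lambda_i$; because $\lambda_i+1/\mu^{(n)}\ge\lambda_i-\lambda_m>0$ for $i=1,\ldots,m-1$, its modulus stays bounded away from zero uniformly in $n$. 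Since $y_i^{(n)}\to0$, this forces $g_i^{(n)}\to0$. The genuinely new content is thin: once the earlier results are in place, the real care lies first in confirming that the non-breakdown and $LU$-decomposability hypotheses of Theorem~\ref{thm:convergence_LR_transformation} truly hold in the TN setting (which is precisely what Theorem~\ref{thm:TN_properties} and Lemma~\ref{lemma:Fukuda_lemma1} provide), and second in the last step, where the factor $\mu^{(n)}$ must be controlled so that the mere statement $L^{(n)}\to I$, i.e. $\mu^{(n)}g_i^{(n)}\to0$, is upgraded to the sharper conclusion $g_i^{(n)}\to0$; I expect this $\mu^{(n)}$-bookkeeping to be the only nontrivial obstacle.
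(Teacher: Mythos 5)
Your proposal is correct and follows essentially the same route as the paper, which states this theorem as a direct consequence of combining Theorem~\ref{thm:TN_properties}, Lemma~\ref{lemma:Fukuda_lemma1}, and Theorem~\ref{thm:convergence_LR_transformation} and noting that $L^{(n)}=I+\mu^{(n)}G^{(n)}\to I$. Your final step is in fact slightly more careful than the paper's, since you correctly observe that $L^{(n)}\to I$ only gives $\mu^{(n)}g_i^{(n)}\to 0$ and you supply the $\mu^{(n)}r_{i,i}^{(n)}$ bound needed to upgrade this to $g_i^{(n)}\to 0$.
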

%
\par
In the remainder of this section, we present two numerical examples to demonstrate the convergence to matrix eigenvalues in the extended $q$-discrete Toda equation \eqref{eqn:extended_qdToda}. 
We used floating-point arithmetic, the computer has a operating system Mac OS Monterey (ver. 12.5 (21G72)) and an Apple M1 CPU, and we employed the numerical computation software Maple 2022.1. 
\begin{figure}[tb]
\begin{center}
\includegraphics[width=0.6\textwidth]{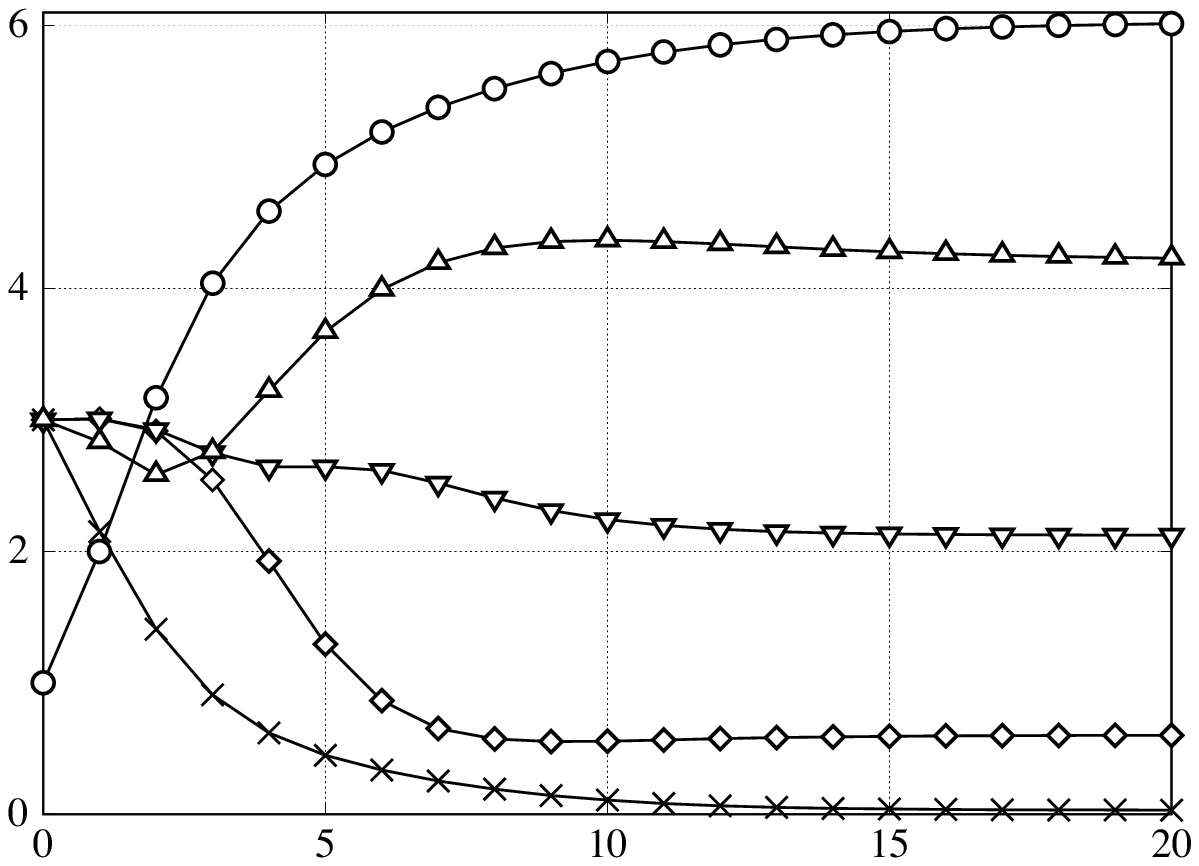}
\caption{
Discrete-time $n$ ($x$-axis) versus values of 
$x_{1,1}^{(n)}$, $x_{2,2}^{(n)}$, $x_{3,3}^{(n)}$ and $x_{4,4}^{(n)}$ ($y$-axis). 
Circles: $\{ x_{1,1}^{(n)} \}_{n=0,1,\dots,20}$; 
triangles: $\{ x_{2,2}^{(n)} \}_{n=0,1,\dots,20}$; 
downward~triangles: $\{ x_{3,3}^{(n)} \}_{n=0,1,\dots,20}$; 
diamonds: $\{ x_{4,4}^{(n)} \}_{n=0,1,\dots,20}$; and 
crosses: $\{ x_{5,5}^{(n)} \}_{n=0,1,\dots,20}$. 
\label{fig:fact1}
}
\end{center}
\end{figure}
\par
The target matrix in the first example is  a $5$-by-$5$ Hessenberg matrix:
\[
A^{(0)} = 
\left( \begin{array}{ccccc}
1 & 2 & 1 & & \\
1 & 3 & 3 & 1 & \\
& 1 & 3 & 3 & 1 \\
& & 1 & 3 & 3 \\
& & & 1 & 3  
\end{array} \right), 
\]
which is given by products of bidiagonal matrices as:
\begin{equation}
\label{eqn:exam_factorization}
A^{(0)} =
\left( \begin{array}{ccccc}
1 & & & & \\
1 & 1 & & & \\
 & 1 & 1 & & \\
 & & 1 & 1 & \\
 & & & 1 & 1 
\end{array} \right)
\left( \begin{array}{ccccc}
1 & 1 & & & \\
 & 1 & 1 & & \\
 & & 1 & 1 & \\
 & & & 1 & 1 \\
 & & & & 1 
\end{array} \right)
\left( \begin{array}{ccccc}
1 & 1 & & & \\
 & 1 & 1 & & \\
 & & 1 & 1 & \\
 & & & 1 & 1 \\
 & & & & 1 
\end{array} \right). 
\end{equation}
Since the three bidiagonal matrices are TN, the Hessenberg matrix $A^{(0)}$ is also TN. 
The {\tt Eigenvalues} function in Maple returns 
$\lambda_{1}=6.03136292416233$, $\lambda_{2}=4.21379563011770$, $\lambda_{3}=2.12210018294618$, $\lambda_{4}=0.601938246298446$ and $\lambda_{5}=0.030803016475350$ as five eigenvalues of $A^{(0)}$. 
If the bidiagonal factorization \eqref{eqn:exam_factorization} is given, we can also compute the five eigenvalues using discrete-time evolutions in the dhToda equation \cite{Fukuda_2013_1}. 
From the matrix structure, we can set $m=5$ and $M=2$ in the extended $q$-discrete Toda equation \eqref{eqn:extended_qdToda}. Next, from the entries of $A^{(0)}$, we can directly determine the initial settings in the extended $q$-discrete Toda equation \eqref{eqn:extended_qdToda} as 
$x_{1,1}^{(0)}=1$, $x_{1,2}^{(0)}=2$, 
$x_{2,2}^{(0)}=x_{2,3}^{(0)}=x_{3,3}^{(0)}=x_{3,4}^{(0)}=x_{4,4}^{(0)}=x_{4,5}^{(0)}=x_{5,5}^{(0)}=3$, 
and $y_{1}^{(0)}=y_{2}^{(0)}=y_{3}^{(0)}=1$. 
We emphasize that the extended $q$-discrete Toda equation \eqref{eqn:extended_qdToda} does not require the bidiagonal factorization \eqref{eqn:exam_factorization} in the initial settings. 
We fix the shift parameters as $\mu^{(n)}=1$ for all $n$. 
Figure \ref{fig:fact1} plots the values of the extended $q$-discrete Toda variables $x_{1,1}^{(n)}$, $x_{2,2}^{(n)}$, $x_{3,3}^{(n)}$, $x_{4,4}^{(n)}$, and $x_{5,5}^{(n)}$ at $n=0,1,\dots,20$. 
The values of $\vert x_{1,1}^{(120)}-\lambda_{1}\vert/\lambda_{1}, \vert x_{2,2}^{(120)}-\lambda_{2}\vert/\lambda_{2}, \vert x_{3,3}^{(120)}-\lambda_{3}\vert/\lambda_{3}, \vert x_{4,4}^{(120)}-\lambda_{4}\vert/\lambda_{4}$, and $\vert x_{5,5}^{(120)}-\lambda_{5}\vert/\lambda_{5}$ are $1.4725998598790676\times10^{-16}, 6.323361389564953\times10^{-16}, 4.1853745965331373\times10^{-16}, 5.533240484978411\times10^{-16}$ and $1.9710836342205465\times10^{-14}$, respectively. 
Thus we observe that $x_{1,1}^{(n)}, x_{2,2}^{(n)}, x_{3,3}^{(n)}, x_{4,4}^{(n)}$, and $x_{5,5}^{(n)}$ respectively approach the eigenvalues $\lambda_1,\lambda_2,\lambda_3,\lambda_4$, and $\lambda_5$. 
Values of the extended $q$-discrete Toda variables $x_{1,2}^{(n)}, x_{2,3}^{(n)}, x_{3,4}^{(n)}$, and $x_{4,5}^{(n)}$ at $n=0,1,\dots,20$ are illustrated in Figure \ref{fig:fact2}. 
In the initial settings, we obviously find that values of the extended $q$-discrete Toda variables are sorted as $x_{1,2}^{(n)}>x_{2,3}^{(n)}>x_{3,4}^{(n)}>x_{4,5}^{(n)}$ as $n$ increases. 
Furthermore, since the extended $q$-discrete Toda variables $y_{1}^{(120)}, y_{2}^{(120)}, y_{3}^{(120)}$, and $y_{4}^{(120)}$ and $g_{1}^{(120)}, g_{2}^{(120)}, g_{3}^{(120)}$, and $g_{4}^{(120)}$ are, respectively, $7.630512286281922\times10^{-16}$, $2.8005008112011965\times10^{-26}$, $2.444684147920427\times10^{-34}$, and $5.982071888277229\times10^{-23}$ and $1.4635234726508965\times10^{-16}$, $8.969926162197963\times10^{-27}$, $1.5260788944700523\times10^{-34}$, and $5.803312361979572\times10^{-23}$, we also see that $y_{i}^{(n)}$ and $g_{i}^{(n)}$ converge to $0$. 
\begin{figure}[tb]
\begin{center}
\includegraphics[width=0.6\textwidth]{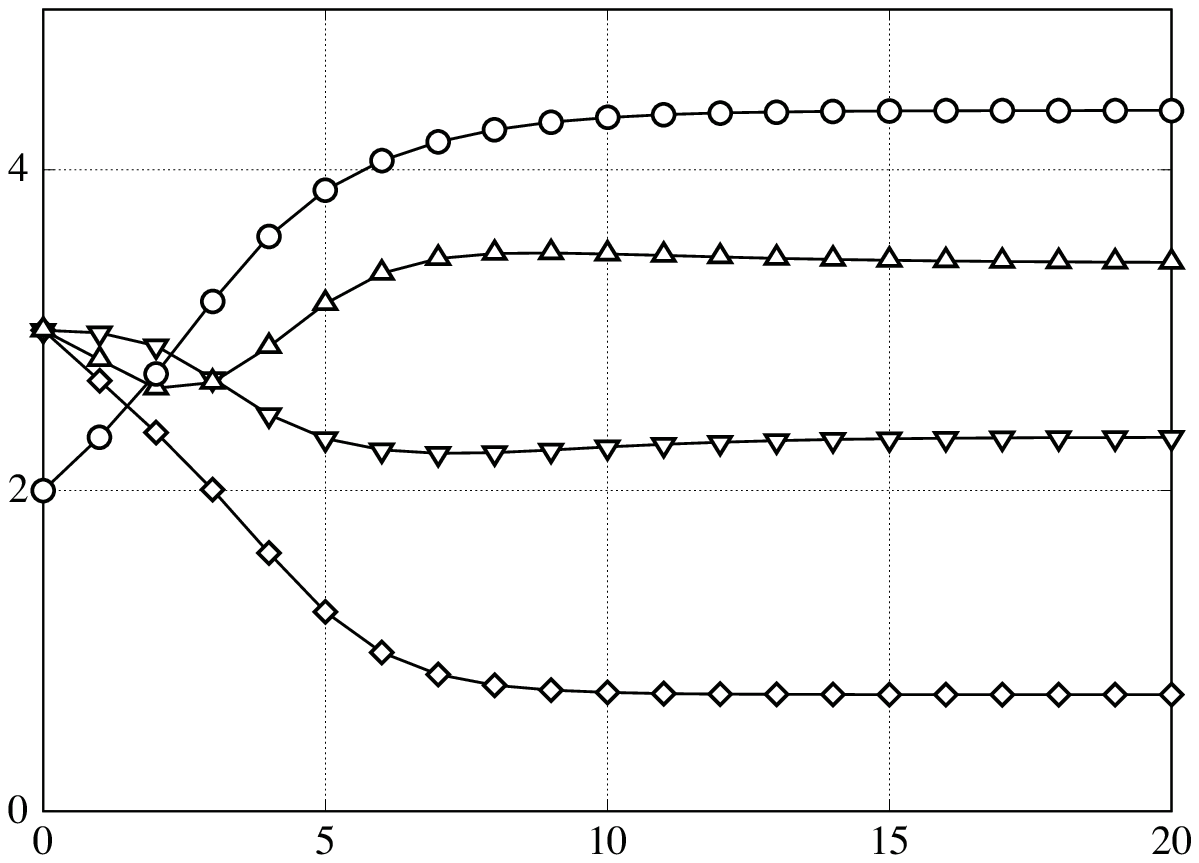}
\caption{
Discrete-time $n$ ($x$-axis) versus values of 
$x_{1,2}^{(n)}$, $x_{2,3}^{(n)}$, $x_{3,4}^{(n)}$ and $x_{4,5}^{(n)}$ ($y$-axis). 
Circles: $\{ x_{1,2}^{(n)} \}_{n=0,1,\dots,20}$; 
triangles: $\{ x_{2,3}^{(n)} \}_{n=0,1,\dots,20}$; 
downward~triangles: $\{ x_{3,4}^{(n)} \}_{n=0,1,\dots,20}$; and 
diamonds: $\{ x_{4,5}^{(n)} \}_{n=0,1,\dots,20}$.
\label{fig:fact2}
}
\end{center}
\end{figure}
\par
The target matrix in the second example is a $5$-by-$5$ Hessenberg matrix:
\[
A^{(0)} = 
\left( \begin{array}{ccccc}
1 & 2 & 1 & 3 & 2 \\
1 & 4 & 2 & 6 & 4 \\
 & 4 & 3 & 9 & 6 \\
 & & 3 & 12 & 8 \\
 & & & 12 & 10
\end{array} \right).
\]
Since $A^{(0)}$ can be decomposed by TN matrices as:
\[
A^{(0)}=
\left( \begin{array}{ccccc}
1 & & & & \\
1 & 1 & & & \\
 & 2 & 1 & & \\
 & & 3 & 1 & \\
 & & & 4 & 1
\end{array} \right)
\left( \begin{array}{ccccc}
1 & 1 & 1 & 1 & 1 \\
 & 1 & 1 & 1 & 1 \\
 & & 1 & 1 & 1 \\
 & & & 1 & 1 \\
 & & & & 1
\end{array} \right)
\left( \begin{array}{ccccc}
1 & \\
 & 2 \\
 & & 1 \\
 & & & 3 \\
 & & & & 2
\end{array} \right), 
\]
$A^{(0)}$ is also TN. The eigenvalues computed by the function {\tt Eigenvalues} are $\lambda_{1}=22.4186804701347$, $\lambda_{2}=5.58970261546314, \lambda_{3}=1.39103188993094, \lambda_{4}=0.446357128198326$, and $\lambda_{5}=0.154227896272925$. In this case, we cannot apply the dhToda equation directly to compute the eigenvalues because the bidiagonal decomposition of the initial matrix is not given. Furthermore, the dhToda equation requires that the lower subdiagonal entries to be all $1$. 
Similarly to the first example, without the bidiagonal factorization of $A^{(0)}$, we can set the initial values and parameters as 
$x_{1,1}^{(0)}=1$, $x_{1,2}^{(0)}=2$, $x_{1,3}^{(0)}=1$, $x_{1,4}^{(0)}=3$, $x_{1,5}^{(0)}=2$, 
$x_{2,2}^{(0)}=4$, $x_{2,3}^{(0)}=2$, $x_{2,4}^{(0)}=6$, $x_{2,5}^{(0)}=4$, 
$x_{3,3}^{(0)}=3$, $x_{3,4}^{(0)}=9$, $x_{3,4}^{(0)}=6$, 
$x_{4,4}^{(0)}=12$, $x_{4,5}^{(0)}=8$, $x_{5,5}^{(0)}=10$, 
$y_{1}^{(0)}=1$, $y_{2}^{(0)}=4$, $y_{3}^{(0)}=3$, $y_{4}^{(0)}=12$, 
$m=5$, and $M=5$. 
We also adopt $\mu^{(n)}=1$ which is the same as in the first example. 
Applying a discrete-time evolution from $n=0$ to $n=150$, we obtain 
$\vert x_{1,1}^{(150)}-\lambda_{1}\vert/\lambda_{1}=1.8896858079126828\times10^{-17}, 
\vert x_{2,2}^{(150)}-\lambda_{2}\vert/\lambda_{2}=7.321880894841474\times10^{-16}, 
\vert x_{3,3}^{(150)}-\lambda_{3}\vert/\lambda_{3}=1.7531315204931288\times10^{-16},
\vert x_{4,4}^{(150)}-\lambda_{4}\vert/\lambda_{4}=9.700460735364932\times10^{-15}$, and 
$\vert x_{5,5}^{(150)}-\lambda_{5}\vert/\lambda_{5}=2.465516259530728\times10^{-14}$. 
Even in the case where the dhToda equation is not applicable, we thus see that the extended $q$-discrete Toda equation \eqref{eqn:extended_qdToda} generates good approximations of the eigenvalues.
%
%
\section{Concluding remarks}
\label{sec:5}
%
In this paper, we proposed an extension of the $q$-discrete Toda equation and related it to similarity transformations of Hessenberg matrices. Next, by introducing implicit $L$ theorem, we showed its relationship to the sifted $LR$ transformations of Hessenberg matrices. By assuming that the initial matrix is totally nonnegative (TN), we clarified the convergence of the extended $q$-discrete Toda variables to eigenvalues of Hessenberg matrices. Finally, we presented two numerical examples that demonstrate the extended $q$-discrete Toda equation's convergence to TN Hessenberg eigenvalues in the extended $q$-discrete Toda equation. Remarkably, in contrast to applications of discrete hungry integrable systems, bidiagonal factorizations are not necessary for the extended $q$-discrete Toda equation to used for computing eigenvalues. 
\par
In future work, we plan to investigate the numerical stability and convergence rate of eigenvalue computation by the extended $q$-discrete Toda equation. Like the quotient-difference algorithm, we will simultaneously attempt to introduce auxiliary variables to avoid cancellation. Since the extended $q$-discrete Toda equation is related to the implicit-shift $LR$ transformations, we also aim to design a shift strategy to accelerate the convergence. 
\section*{Acknowledgements}
This work was partially supported by the joint project of Kyoto University and Toyota Motor Corporation, titled ``Advanced Mathematical Science for Mobility Society''.
\section*{ORCID}
R. Watanabe: 0000-0002-5758-9587
%
%

%
\end{document}